\documentclass[11pt]{iopart}

\usepackage{braket,dsfont,color,bm}

\usepackage[bottom=1in,left=1in,right=1in,top=1in]{geometry}
\usepackage{graphicx}   
\usepackage{multirow}
\usepackage[normalem]{ulem}
\usepackage{epsfig,amssymb}
\usepackage{changepage}
\usepackage{bbm}
\usepackage{bm}
\usepackage{times}
\usepackage{amsthm}
\usepackage[caption=false]{subfig}
\usepackage{multirow}
\usepackage[normalem]{ulem}
\usepackage{braket}

\usepackage{dsfont,xcolor}
\usepackage{soul}
\usepackage{etoolbox}

\definecolor{rev}{rgb}{0,0,0}
    
 \newcommand{\dd}{\mathrm{d}}

\newtheorem{theorem}{Theorem}[section]

\newtheorem{corollary}{Corollary}[section]
\newtheorem{lemma}{Lemma}[section]

\makeatletter
\def\@mkboth#1#2{}

\newlength\appendixwidth
\preto\appendix{\addtocontents{toc}{\protect\patchl@section}}
\newcommand{\patchl@section}{%
  \settowidth{\appendixwidth}{\textbf{Appendix }}%
  \addtolength{\appendixwidth}{.1em}%
  \patchcmd{\l@section}{.1em}{\appendixwidth}{}{\ddt}%
}
\makeatother

\newcommand{\binom}[2]{\left( \begin{array}{c} #1 \\ #2 \end{array} \right)}

\usepackage{booktabs}      

\usepackage{cite}

\newcommand{\QFI}{F_{\mathrm Q}}

\begin{document}

\title[]{Precision bounds for frequency estimation under \\collective dephasing and open-loop control}

\author{Francisco Riberi$^{1,2}$,
Gerardo Paz-Silva$^{3,4}$, and Lorenza Viola$^2$}

\address{$^1$Electrical and Computer Engineering Department, University of New Mexico, 498 Terrace St NE, Albuquerque, NM 87106} 

\address{$^2$Department of Physics and Astronomy, Dartmouth College, 6127 Wilder Laboratory, \\ Hanover, NH 03755, USA}

\address{$^3$ Center for Quantum Computation and Communication Technology (Australian Research Council), Centre for Quantum Dynamics, Griffith University, Brisbane, Queensland 4111, Australia}

\address{$^4$ Diraq Pty Ltd, Sydney, NSW, Australia}

\begin{abstract}
Dephasing noise is a ubiquitous source of decoherence in current atomic sensors. We address the problem of entanglement-assisted frequency estimation subject to classical dephasing noise with full spatial correlations (collective) and  arbitrary temporal correlations.  Our contributions are threefold. (i) We derive rigorous, state-independent bounds on the achievable estimation precision, showing how they are entirely determined by the short-time behavior of the decoherence function. For temporally uncorrelated (Markovian) dephasing, precision is limited by a probe-independent constant. For temporally correlated stationary noise, the bound approaches the noiseless limit for classical states, precluding any asymptotic quantum advantage. (ii) We show that these scaling bounds are tight, by constructing generalized Ramsey protocols that saturate them. These optimal protocols use squeezing at the input and before readout, both of which are available in state-of-the-art atomic interferometers. Implementing a perfect-echo protocol, which reaches Heisenberg scaling in the absence of noise, remains optimal in this noisy setting, irrespective of the noise temporal correlations. (iii) We prove that arbitrary collective open-loop control cannot lift the no-go for super-classical precision scaling under either Markovian or colored stationary noise, highlighting the detrimental nature of full spatial correlations. In the latter case, temporal correlations may nonetheless enable constant-factor improvements over the standard quantum limit, which may still be important in practical metrological scenarios.
\end{abstract}


%
%
%

\section{Introduction}

Entanglement-assisted quantum metrology exploits nonclassical correlations in the quantum state of the sensing system to estimate physical parameters with a precision exceeding that achievable by classical strategies \cite{SmerziRMP,RDDrev,BraunRMP}.  Such techniques have found application in state-of-the-art interferometric measurements of physical quantities ranging from frequency to magnetic fields and weak forces, and play a central role  in time-keeping standards \cite{clocks,Ye,Colombo,DegenRMP,SmerziRMP}. 
In all of these settings, precision is fundamentally constrained by both the available quantum resources and unavoidable environmental noise.

A standard interferometric metrology protocol for estimating a parameter of interest, say, $b$, employs $N$ quantum probes prepared in a suitable input state $\rho_0$.  Evolution under the action of a Hamiltonian $H_{\rm S}  \equiv b H_0(t)$, possibly including time-dependent noise, results in a $b$-dependent output state $\rho_b(t)$. Information about $b$ is collected over a total experimental time $T\equiv \nu t$ by repeatedly measuring a system observable over $\nu$ ``shots'', each of duration $t$. 
Then, an estimator $\hat b$ is constructed in a post-processing stage, with precision $\Delta \hat{b}$ lower-bounded in terms of the quantum Fisher information (QFI). For classical, separable probe states,  precision is limited by the {\em standard quantum limit} (SQL), $\Delta \hat b \propto N^{-1/2}$. Entangled inputs can instead surpass this scaling: in the absence of noise, Greenberger-Horne-Zeilinger (GHZ) states are known to saturate the optimal Heisenberg-limited precision (HS) of linear metrological schemes, $\Delta \hat b \propto N^{-1}$; likewise, spin-squeezed states provide a robust experimental route towards achieving HS, albeit with a worse prefactor \cite{Escher,Guta,SmerziRMP,DaveSqueezing1,DaveSqueezing2}. 

In realistic settings, however, the probes couple to uncontrolled external degrees of freedom. This coupling induces decoherence, that can severely degrade the metrological performance. When the noise acts transversely to the signal Hamiltonian, active mitigation strategies based on dynamical decoupling (DD) and quantum error correction can, in principle, restore super-classical scaling \cite{Chaves2,Sekatski,DurDD, Layden2,Zhou,Mann}. By contrast, when the noise operator commutes with the signal Hamiltonian, as it is the case for dephasing noise, mitigation strategies are restricted. This type of ``parallel'' fluctuations are a dominant source of decoherence in atomic and solid-state platforms \cite{Khor2009,BECCollDeph,Dorner2012,FelixPRA}. Entangled states are particularly affected, as the rapid decay of coherences strongly suppresses their usefulness. The extent of this degradation depends sensitively upon the spatial and temporal correlations of the underlying noise.

When the noise is fully uncorrelated in space and time, improvements over the SQL are precluded beyond a constant factor \cite{Huelga1997,RDDelusive,RDDefficient}. Spatially local but temporally correlated stationary noise is known to be fundamentally bounded by an intermediate \emph{Zeno regime}, $\Delta \hat{b}\propto N^{-3/4}$ \cite{Macies2015,HaaseNJP}.  
For \emph{partial} spatial correlations, several strategies have been proposed to retain a quantum advantage, ranging from randomized metrology to noise-aware state engineering and sensor's geometry optimization \cite{Jeske2014, FUR, Bias}. 

The fully collective dephasing regime, where all probes are identically coupled to the fluctuations, is harder to counter: spatial noise-tailoring is ineffective, and quantum error correction offers no protection as the signal Hamiltonian lies in the span of the Lindblad operators \cite{Layden2,Sekatski}. Likewise, while decoherence-free subspaces provide in principle a pathway to avoid this kind of collective noise \cite{Dorner2012}, they tend to be fragile to perturbations and experimentally difficult to scale to large system size. 

Time-dependent open-loop control is another resource that has received some attention in this scenario. Both pulsed DD and continuous driving with collective spin components have been found to improve single-shot QFI performance for low-dimensional systems \cite{Haidong2017,zhai2023control2,NMcontrol}. To our knowledge, however, no general analysis has been carried out in the asymptotic regime where $N \gg 1$. \emph{A priori}, it would seem reasonable to expect that suitable DD techniques could help mitigating temporally correlated dephasing. This intuition is rooted in the fact that, despite sharing a quantization axis, the control-induced filtering could potentially modulate signal and noise in different ways -- ultimately enabling the noise to be suppressed by retaining the target signal to {\em some} extent. In the current state of the literature, it remains unclear whether sub-SQL precision scaling is fundamentally precluded, both assisted by DD or otherwise.

Our work aims to fill this gap, by addressing the above questions and deriving rigorous precision bounds for frequency estimation subject to classical collective dephasing. Our bounds are valid for any choice of input states and measurement basis.  In the absence of applied control, our analysis builds on existing results for phase estimation \cite{Escher,Escher20122}, and extends them by explicitly incorporating optimization over the interrogation time, which is crucial in our setting. Importantly, we show that the achievable precision scaling depends entirely on the short-time behavior of the decoherence function or, equivalently, on the temporal-correlation structure of the noise. For temporally uncorrelated (Markovian) dephasing with a white spectrum, as well as for stationary noise with a colored spectrum, our bounds preclude any asymptotic improvement beyond SQL scaling. For non-stationary colored noise, super-classical scaling is not ruled out, but remains fundamentally constrained.

Beyond establishing precision limits, we construct interferometric protocols based on two input classes that saturate these bounds. GHZ states achieve the optimal scaling up to a constant factor, but remain challenging to implement experimentally. Spin-squeezed states generated by one-axis-twisting, so-called one-axis-twisted states (OATS) \cite{Kita1993, OATS2023}, are robust to imperfect preparation and capable of attaining the optimal precision through the application of a disentangling operation before readout. Remarkably, we show that the best protocol in the presence of collective dephasing coincides with the noiseless optimal strategy \cite{Monika}, establishing a direct and robust pathway to implementation in realistic platforms.

Finally, we examine the potential of open-loop, collective Hamiltonian control as a resource to increase precision in this setting. We establish two general {\em no-go theorems}, showing that superclassical precision scaling remains precluded under control,  although constant-factor improvements are possible for temporally correlated noise, in a sense we make precise.

The content is organized as follows. Section\,\ref{Background} provides the necessary quantum estimation and open-quantum system background, by introducing the model and relevant noise assumptions. In Sec.\,\ref{general}, we derive state-independent precision bounds and analyze the role of temporal correlations determining the asymptotic precision scaling. We discuss the performance of protocols using GHZ and spin-squeezed input states in Sec.\,\ref{saturating}. The benefits of open-loop 
control as a metrological resource is discussed in Sec.\,\ref{openloop}, with special emphasis on schemes employing sequences of arbitrary instantaneous control pulses. We close in Sec.\,\ref{conclusions} by providing a summary of our results, along with a discussion of some outstanding open questions. Additional discussion on noise properties and technical derivations are included in three separate appendices.

\section{Background}
\label{Background}

\subsection{Frequency estimation by Ramsey interferometry} 
\label{int}

We consider a single-parameter frequency estimation problem within a linear metrological setting, whereby a sensing system comprising $N$ qubit probes is prepared in an initial state $\rho_0$ and evolves for an interrogation time $t$ under dynamics that encode the target signal $b$, without additional interaction between probes being permitted (see \cite{APLQ} for a discussion of nonlinear noisy frequency estimation). If the resulting output state is $\rho_b(t)$, the measurement of a system observable $O=\sum_i \mu_i \Pi_i$ yields outcome $\mu_i$ with probability $P(\mu_i)=\mathrm{Tr}[\rho_b(t)\Pi_i]$. Repeating the protocol $\nu$ times  
for a total runtime $T = \nu t$ 
results in the data string  $\vec{\mu}=(\mu_1, \ldots \mu_{\nu})$, which contains information about $b$.  Then, a suitably chosen function $\hat{b}(\vec{\mu})$ assigns an estimate $\hat b$ of the parameter. When, in a ``local estimation setting'',  prior knowledge restricts $b$ to a neighborhood of a known value $b_0$, we can compute the precision by the so-called {\em method of moments} \cite{Pezze20142,SmerziRMP,Bias}:
\begin{eqnarray}
 \qquad \quad \Delta \hat{b}^2(t)  
 =  \frac{\Delta_{\rho_b}^2 O(t) }{
 \nu \big[\partial_b \langle O(t) \rangle {\vert_{b_0}} \big]^{2}} = \frac{t\,\Delta^2_{\rho_b} O(t) }{
 T \big[\partial_b \langle O(t) \rangle {\vert_{b_0}} \big]^{2}}, 
\label{mom} 
\end{eqnarray}
with $\Delta_{\rho_b}^2 O(t)\equiv [\langle O^2(t)\rangle - \langle O(t)\rangle^2 ]$ denoting the variance taken with respect to $\rho_b(t)$. Throughout this work, we treat the total number of probes, $N$, as a parametric variable, and 
we adopt an idealized 
model in which each measurement shot of duration $t$
is repeated arbitrarily many times within a \emph{fixed} total experimental time $T$, with no additional overheads \footnote{In this way, we neglect dead time associated with state preparation, control pulses, readout, or classical post-processing, as well as finite control bandwidth, pulse-duration constraints, or hardware-imposed minimum cycle times. This idealization allows us to isolate {\em fundamental} precision limits arising from collective dephasing alone and to perform an {\em unconstrained} optimization over $t$. While such overheads are inevitably present in realistic implementations, they are highly platform-dependent and would obscure the universal scaling behavior that is the focus of the present analysis.}. All bounds and optimality statements are formulated under these resource constraints.


Equation (\ref{mom}) provides a convenient operational benchmark; however, it clearly depends on the choice of the operator $O$. The ultimate limit to performance for a given output $\rho_b(t)$ entails an optimization over all possible measurement strategies. For an {\em unbiased} estimator $\hat b$ (see \cite{Bias} for a discussion of noise-induced bias effects), the achievable precision is lower-bounded by the {\em quantum Cram\'{e}r-Rao bound} (QCRB): 
\begin{equation}
\qquad \Delta \hat b^2 (t) \geq \Delta \hat b^2_{\rm QCRB}(t) 
=\frac{1}{\nu \, \QFI[\rho_b(t)]}= \frac{t}{T  \QFI[\rho_b(t)]}. 
\label{eq:QCRB}
\end{equation}
Here, $\QFI[\rho_b(t)]$ denotes the QFI of the output state with respect to $b$. The QFI  quantifies the maximal amount of information we can extract about the parameter $b$, given $\rho_b(t)$. Formally, it can be expressed in terms of the fidelity of two states with infinitesimally differing parameters \cite{NielsenChuang, Hubner2}, 
\begin{equation}
F_Q[\rho_b]\, db^2 = 8 (1- \mathcal{F}(\rho_b,\rho_{b+\delta b})), \qquad \mathcal{F}(\rho, \sigma) \equiv {\rm Tr} \Big[\sqrt{\sqrt{\rho} \,\sigma\, \sqrt{\rho}} \,\Big].
\label{QFIFid3}
\end{equation}
Equivalently, it is given by the variance of a zero-mean, Hermitian operator, the so-called symmetric logarithmic derivative (SLD) $L$, which is implicitly defined through the equation
$F_Q[\rho_{b}] = \Tr\!\left[\rho_b L^2\right].$
When the eigendecomposition of the state $\rho_b \equiv  \sum_i \lambda_i \, |i\rangle\!\langle i| 
$ is known, one may evaluate both QFI and SLD explicitly \cite{Pezze20142}:
\begin{equation}
L =
2 \sum_{\lambda_i+\lambda_j>0}
\frac{
\langle i | \partial_b \rho_b | j \rangle
}{
\lambda_i + \lambda_j
}
\, |i\rangle\!\langle j|,
\qquad
F_Q[\rho_b]
=
2 \sum_{\lambda_i+\lambda_j>0}
\frac{
\left| \langle i | \partial_b \rho_b | j \rangle \right|^2
}{
\lambda_i + \lambda_j
}.
\label{eq:QFIeigen}
\end{equation}
The QCRB can be saturated through error propagation,  Eq.\,(\ref{mom}),  by measuring an observable $O$ which is diagonal in the eigenbasis of the SLD operator $L$. However, in noisy settings, it is typically not possible to diagonalize $\rho_b(t)$ exactly, and more sophisticated methods to assess metrological performance are required.

\subsection{Classical collective dephasing}
\label{sett}

We assume that the $N$-qubit sensing system is subject to classical collective dephasing. In the absence of external control, the dynamics is then governed by a noisy Hamiltonian of the form 
\begin{equation}
   \qquad \qquad  \qquad H_0(t)= J_z\,[b+ \xi(t)], 
    \label{Ham}
\end{equation}
where $J_z$ denote the $z$-component of the collective spin angular momentum, and $\xi(t)$ is a classical stochastic process describing phase noise. Specifically, we assume the fluctuations to be zero-mean and Gaussian, implying they are fully characterized by their first two cumulants:
\begin{equation*}
   \qquad \quad \langle \xi(t)\rangle=0, \;\forall t, \qquad\langle \xi(t_1) \xi(t_2)\rangle=C(t_1,t_2), 
\end{equation*}
where $\langle \cdot\rangle$ denotes the expectation over noise realizations and $C(t_1,t_2)$ is the two-point (auto)correlation function, which describes the noise temporal structure.
The evolution under the Hamiltonian in Eq.\,(\ref{Ham}) may be exactly represented as a Gaussian average over stochastic unitary trajectories \cite{Durkin2,DurDD}, namely, 
\begin{equation}
\begin{array}{rcl}
\rho_b(t) &=& \displaystyle
\int_{-\infty}^{\infty} \dd \lambda (t)\;
\mathcal{N}_{\lambda}(0,\chi(t))\;
e^{-i (b t + \lambda(t)) J_z}\,
\rho_0\,
e^{i (b t + \lambda(t)) J_z}
\\[1ex]
&& \displaystyle
\mathcal{N}_{\lambda}(0,\chi(t))
\equiv \frac{1}{\sqrt{\pi \chi(t)}}\,
e^{-\lambda(t)^2/\chi(t)}
\end{array}
\label{eq:stochastic_map}
\end{equation}
where $\lambda(t) \equiv \int_{0}^t\!\dd s \,\xi(s)$ is a classical (still zero-mean and Gaussian) stochastic process that quantifies the random phase accumulation, and $\chi(t)$ is a decay function that describes the influence of the noise on the probe dynamics. The role of $\chi(t)$ in modulating the suppression of coherence elements is  seen more transparently by rewriting Eq.\,(\ref{eq:stochastic_map}) in the collective (Dicke) spin basis of $J^2, J_z$ eigenstates:
\begin{equation}
   \qquad \qquad \rho_b(t)= \!\!\sum_{m,m'=-J}^J e^{-i b t (m-m')}\,e^{-\chi(t) (m-m')^2}\,|J, m\rangle \langle J,m'|, 
    \label{eq:Dicke}
\end{equation}
which will be convenient for establishing analytical bounds and constructing optimal probe states.

\subsection{Noise temporal correlations and short-time coherence decay}

The decay function entering Eq.\,(\ref{eq:Dicke}) may be expressed as the double integral of the two-point correlation function,
\begin{equation}
\qquad \qquad\chi(t)= \langle \lambda^2 (t) \rangle\equiv \int_0^t \!\! \dd s \int_0^t \!\! \dd s' \, C(s,s').
\label{chit}
\end{equation}
Generally, we may assume that $\chi(t)$ admits a short-time power-law expansion of the form \cite{Degen}:
\begin{equation}
\qquad \quad \chi(t) \simeq \chi_0^n \,(\omega_c t)^{n}, \qquad n \geq 1,\;  \omega_c t \ll 1, 
\label{eq:chi_powerlaw} 
\end{equation}
where $\chi_0^n > 0$ is a dimensionless constant and $\omega_c$ a parameter with units of frequency. The exponent 
$n$ may be taken to characteriz the short-time ``softness'' of the noise and directly determines the attainable metrological scaling, as we shall show in Secs.\,\ref{general} and \ref{openloop}. 

Important distinctions arise depending on whether the noise is stationary or not. When the noise is (wide-sense) stationary, $C(t_1,t_2)$ depends only on the time lag $\tau\equiv |t_2-t_1|$, and $C(0)$ coincides with the noise variance. For stationary noise, we can write Eq.\,(\ref{chit}) in the frequency domain in terms of the power spectral density (simply ``spectrum'' henceforth) $S(\omega)$ -- the Fourier transform of $C(\tau)$ -- and a first-order ``filter-function'' $F(t,\omega)$ \cite{Paz2014, FelixPRA, FUR}, that is, 
\begin{equation}
\chi(t) =\int_{- \infty}^{\infty} \dd \omega\, \frac{S(\omega)}{2 \pi}\, 
F(t, \omega), \qquad 
S(\omega)\equiv  \int_{-\infty}^{\infty}\!\!\dd\tau\,e^{-i \omega \tau} C(\tau), 
\label{chit2}
\end{equation}
where, for the case of free evolution, we simply have 
\begin{equation}
\qquad \qquad F(t,\omega)\equiv \bigg\vert \int_0^t \!\dd s \, e^{i \omega s} \bigg\vert^2 = 
\frac{2\sin(\omega  t/2)}{\omega}. 
\label{chit3}
\end{equation}
As one may show (see Appendix \ref{app:formal_spectrum}), the requirements that $C(\tau)$ be continuous at $\tau=0$ and the variance be finite, $C(0)<\infty$, are {\em sufficient} conditions for the short-time decay behavior to be {\em quadratic} ($n=2$), 
$$ \chi(t) \simeq\chi_0^2 (\omega_c t)^2.$$ 
We can interpret $\omega_c$ as a cutoff frequency setting the characteristic timescale of the noise correlations $\tau_{\rm B} = \omega_c^{-1}$. This case comprises most of the temporally correlated dephasing process typically encountered in physically relevant settings, including Ornstein-Uhlenbeck noise \cite{Walsh} (for which the autocorrelation is not analytic at zero), or even $1/f^\alpha$ noise processes, ubiquitously arising in the solid state \cite{faoro,paladino}.

Non-quadratic short-time behavior is nonetheless possible when the above requirements do not apply. Notably, Gaussian white noise is still stationary in a generalized sense, but with a singular correlator, $C(\tau)= \chi_0 \omega_c \,\delta(\tau)$, with $\delta$ denoting the Dirac's distribution. In this case, which physically corresponds to temporally uncorrelated, Markovian noise, Eq.\,(\ref{chit}) leads to an exact {\em linear} behavior: $\chi(t)= \chi_0\, \omega_c t $, with $n=1$.

Higher-order exponents, $n>2$, arise only (although not necessarily) when stationarity is violated, so that $C(t_1,t_2)$ is not solely a function of the time lag. Then, the spectrum $S(\omega)$ does not exist in the Wiener-Khinchin sense \cite{MOCbook}, and $\chi(t)$ does not admit the simple representation given in Eq.\,(\ref{chit2}). For instance, the decay corresponding to Brownian motion, relevant for instance to describe diffusion processes in nuclear magnetic resonance \cite{Chakrabarti_2016}, is cubic in time: 
 \begin{equation*}
     \!\!\!\!\!\!\!\!\!\!\!\!\!\!\!\! \chi(t)= \int_0^t\!\!\dd s\int_0^t\!\!\dd s' \langle \xi(s) \xi(s')\rangle= \chi_0 \omega_c^3\int_0^t\!\!\dd s\int_0^t\!\!\dd s' \min(s,s')= \frac{2}{3} \chi_0 (\omega_c t)^3.
 \end{equation*}
When $\xi(t)$ is the integrated version of a stationary process -- say,  $\xi(t)\equiv \omega_c\int_0^t\! \dd s\, \eta(s)$, with $\eta$ (wide-sense) stationary -- we may still write $\chi(t)$ in frequency domain (see details in Appendix \ref{app:formal_spectrum}):
\begin{equation*}
 \!\!\!\!\!\!\!\!\!\!\!\!\!\!\!\!   \chi(t)\equiv  \int_{-\infty}^{\infty}\frac{\dd\omega}{2\pi}\, \frac{S_\eta(\omega)}{\omega^2} \,
 G (t,\omega), 
\quad G(t, \omega)\equiv  \left| \int_{0}^t\!\!\dd s\int_0^s\!\!\dd u \,e^{i \omega u} \right|^2 = 
\frac{|1-e^{i\omega t}+i\omega t|^2}{\omega^2},
\end{equation*}
where $S_\eta(\omega)/\omega^2$ can be {\em formally} identified as $S_\xi(\omega)$. Importantly, however, the relevant filter function $G(t, \omega)$ is now effectively a second-order filter, thus preventing any direct connection between the short-time behavior of $\chi(t)$ and the high-frequency decay of $S_\xi(\omega)$.

\section{Fundamental precision bounds under collective dephasing}
\label{general}

In this section we derive state-independent bounds on the precision of frequency estimation under collective dephasing. The bounds apply to arbitrary initial probe states and choice of measurement basis, explicitly incorporating optimization over the interrogation time. 

\subsection{QFI growth under collective dephasing}
\label{QFIg}

For an interrogation time $t$, the ensemble-averaged probe state under collective dephasing can be expressed in the $J_z$ eigenbasis by using Eq.\,(\ref{eq:Dicke}). Physically, the decay function $\chi(t)$ constrains the attainable QFI by suppressing the distinguishability between output states at nearby values of the frequency $b$. Following the variational method of Escher \emph{et al.}~\cite{Escher}, we introduce a fictitious environment $\rm B$ and consider a purification $\ket{\Phi_b(t)}\in\mathcal{H}_{\rm SB}$ satisfying $\Tr_{\rm B}\!\left[\ket{\Phi_b(t)}\!\bra{\Phi_b(t)}\right]=\rho_b(t)$. This approach is particularly natural in the present setting, since collective dephasing can be described in terms of a random-unitary channel, Eq.\,(\ref{eq:stochastic_map}) see also \cite{FUR} for an explicit realization in terms of a bosonic environment). 
As measurements on the system are a subset of measurements on the enlarged space, the QFI satisfies
\begin{equation}\qquad \qquad \quad
\QFI\!\left[\rho_b(t)\right]\le \QFI\!\left[\ket{\Phi_b(t)}\!\bra{\Phi_b(t)}\right].
\label{eq:purif_ub}
\end{equation}
The bound can be tightened by exploiting the non-uniqueness of this construction: any other purification is of the form
$\ket{\Psi_b(t)}=u_{\rm B}(b)\ket{\Phi_b(t)}$, for some unitary $u_{\rm B}(b)$ acting only on $\mathcal{H}_B$~\cite{Escher}.
Defining
\begin{equation}
\qquad  \quad  H_{\rm SB}(b)\equiv  i\,\partial_b \ket{\Phi_b(t)}\!\bra{\Phi_b(t)}, 
\qquad 
h_{\rm B}\equiv  i\,(\partial_b u_{\rm B}^\dagger) u_{\rm B},
\end{equation}
Escher \emph{et al.} show that the QFI of the purified family can be expressed as a variance with respect to $\Phi_b(t)$:
\begin{equation}
\qquad \quad \QFI\!\left[\ket{\Psi_b(t)} \bra{\Psi_b(t)}\right]=4\,\Delta^2_{\Phi_b(t)} H, \qquad H \equiv H_{\rm SB}-h_{\rm B} .
\label{eq:Escher_variance}
\end{equation}

We apply this method to minimize Eq.\,(\ref{eq:Escher_variance}) for a concrete purification and a one-parameter family of bath unitaries. A convenient choice producing Eq.\,\ref{eq:Dicke} is obtained by appending a single harmonic oscillator bath mode initially in the vacuum and applying a $J_z$-dependent displacement,
\begin{equation}
\ket{\Phi_b(t)}
= e^{-i b t J_z}\,
\ket{\vartheta}\otimes
e^{\sqrt{\chi(t)}\,(a^\dagger-a)\,J_z}\ket{0}_{\rm B} ,
\label{eq:purification_choice}
\end{equation}
where $\rho_0=\ket{\vartheta}\!\bra{\vartheta}$ is the input state, and $\{a,a^\dagger\}$ are bosonic operators.
For this purification, $H_{\rm SB}=t\,J_z$ acts on the system only. Consider next the bath-local  family of ``gauge transformations'' depending on the free parameter $\zeta$ 
\begin{equation}
u_B(b)=\exp\!\bigg[\!-i\, b \zeta\, t\,\frac{a^\dagger+a}{2\sqrt{\chi(t)}}\bigg],
\qquad 
h_B(\zeta)=\zeta\,t\,\frac{a^\dagger+a}{2\sqrt{\chi(t)}}.
\label{eq:gauge_family}
\end{equation}
Evaluating Eq.\,(\ref{eq:Escher_variance}) on the state (\ref{eq:purification_choice}), we obtain (see Appendix \ref{varcalc} for details)
\begin{equation}
\qquad \QFI\!\big[\ket{\Psi_b(t)} \bra{\Psi_b(t)}\big] =
4t^2 \biggl[(1-\zeta)^2\,\Delta_{\rho_b}^2 J_z+\frac{\zeta^2}{4\,\chi(t)}\bigg].
\label{eq:QFI_lambda}
\end{equation}
Minimizing over $\zeta$, we find
\begin{equation}
\qquad \qquad \qquad \zeta_{\rm opt}=\frac{4\,\chi(t)\,\Delta_{\rho_b}^2 J_z}{1+4\,\chi(t)\,\Delta_{\rho_b}^2 J_z}.
\end{equation}
Therefore, the variationally optimized purification yields the following state-independent bound for the single-shot QFI:
\begin{equation}
\qquad \qquad \QFI\!\left[\rho_b(t)\right] \le
\frac{4t^2\,\Delta_{\rho_b}^2 J_z}{1+4\,\chi(t)\,\Delta_{\rho_b}^2 J_z} \leq \
\frac{N^2 t^2}{1+\chi(t)\,N^2},
\label{eq:QFI_state_indep_purif}
\end{equation}
where the second inequality follows from  $\Delta_{\rho_b}^2 J_z\le \|J_z\|^2=(N/2)^2$, valid for any initial state.
Given a fixed runtime $T=\nu t$, the corresponding bound on the total QFI, obtained by multiplying by $\nu$, satisfies
\begin{equation}
\qquad \quad \qquad F_{Q\,\rm tot}(N,T)\le 
T\,\max_{t>0}\;
\frac{N^2 t}{1+\chi(t)\,N^2}.
\label{eq:QFI_tot_bound}
\end{equation}

\subsection{Time optimization and asymptotic precision bounds}

We now optimize Eq.\,(\ref{eq:QFI_tot_bound}) with respect to interrogation time by taking the decay coefficient to obey the short-time noise law (\ref{eq:chi_powerlaw}). We distinguish between Markovian and colored fluctuations:

\paragraph{Markovian noise ($n=1$).}
Here, $F_{Q\,\rm tot}(N,T)$ is a strictly increasing function of $t$ with no finite maximizer. The optimized total-QFI bound follows:
\begin{equation}
F_{Q\,\rm tot}(N,T) \leq T \sup_{t>0} \frac{N^2 t}{1+N^2\chi_0\omega_c t}=T\lim_{t\to\infty}\frac{N^2 t}{1+N^2\chi_0\omega_c t}
=\frac{T}{\chi_0\,\omega_c},
\label{eq:QFI_tot_opt_markov}
\end{equation}
with the understanding that any experimental upper bound on $t$ -- or additional constraints not captured by Eq.\,(\ref{Ham}) -- may fix the interrogation time in practice.

\paragraph{Colored noise ($n\geq2$).}
In this case, solving Eq.\,(\ref{eq:QFI_tot_bound}) for $\chi(t) \simeq \chi_0^n (\omega_c t)^n$ yields a unique stationary point $t^{\star}$. Evaluating at $t^{\star}$ produces an exact time-optimized bound. The performance is determined by the following expressions:
\begin{equation}
\!\!\!\!\!\!\!\!\!\!t^\star= \frac{1}{ \chi_0 \,\omega_c} \left(\frac{1}{n-1} \right)^{1/n} N^{-2/n}, \quad F_{Q\,\rm tot}(N,T)\le
\frac{T}{\chi_0\,\omega_c}\;
\frac{(n-1)^{\,1-\frac{1}{n}}}{n}\;
N^{\,2-\frac{2}{n}}.
\label{eq:QFI_tot_opt_exact}
\end{equation}


Using the QCRB, Eq.\,(\ref{eq:QCRB}), state-independent lower-bounds to the precision follow:
\begin{equation}
\Delta \hat{b}_{\rm tot}^2
\geq
\left\{
\begin{array}{ll}
\frac{ \chi_0 \,\omega_c}{T}, & n=1, \\[1.2ex]
\frac{\chi_0\,\omega_c}{T}\,
g(n)\,
N^{-\frac{2(n-1)}{n}}, & n>1,
\end{array}
\right.
\qquad
g(n)\equiv n (n-1)^{-1+\frac{1}{n}}.
\label{eq:QFI_summary_bounds}
\end{equation}
We can draw several conclusions from Eqs.(\ref{eq:QFI_tot_opt_markov})-(\ref{eq:QFI_tot_opt_exact}).
First, in the Markovian case $n=1$, precision cannot increase with the number of probes $N$.
Second, for colored noise, the bound allows for super-classical scaling with $N$ only for $n>2$. 
This establishes the SQL as a fundamental limitation for frequency estimation under collective stationary 
dephasing with a colored spectrum.
Third, even when $n>2$, HS is excluded unless the noise is effectively absent in the short-time limit.
%
Taken together, these results show that the possibility of surpassing the SQL is determined \emph{ entirely} by the short-time structure of the dephasing noise. Since the bounds derived here are independent of the probe state and measurement,  they constitute fundamental precision limits. In the following section, we examine concrete families of input states and corresponding measurement strategies, showing how these bounds can be saturated in practice.

\section{Saturating the precision bounds with optimal 
protocols}
\label{saturating}


\subsection{Interferometry with GHZ states}
\label{GHZstates}

We consider the $N$-qubit GHZ state, written in terms of Dicke states:
\begin{equation}
\qquad \qquad \ket{\mathrm{GHZ}}=\frac{1}{\sqrt{2}}\Bigl(\ket{J,J}+\ket{J,-J}\Bigr), \qquad J= N/2.
\label{eq:GHZ_def}
\end{equation}
Under the noisy dynamics generated by Eq.\,(\ref{Ham}), the evolved state at interrogation time $t$ remains effectively two dimensional. We may write it in the 
basis $\{\ket{J,J},\ket{J,-J}\}$ as 
\begin{equation}
\qquad \quad \rho_b^{\rm GHZ}(t)
=
\frac{1}{2}
\left(
\begin{array}{cc}
1 & e^{-i N b t}\,e^{-N^2\chi(t)} \\
e^{i N b t}\,e^{-N^2\chi(t)} & 1
\end{array}
\right).
\label{eq:GHZ_density}
\end{equation}
Thus, collective dephasing  suppresses the off-diagonal coherence at a rate proportional to $N^2\chi(t)$. Because the state in Eq.\,(\ref{eq:GHZ_density}) can be readily diagonalized, we can evaluate the SLD and the total QFI through Eq.\,(\ref{eq:QFIeigen}):
\begin{equation}
\!\!\!\!\!\!\!\!\!\!\!L = e^{- N^2 \chi(t)} \,N\,t\,
\left(
\begin{array}{cc}
0 & i \\
-i & 0
\end{array}
\right),
\qquad
F_{Q\,\rm tot}^{\rm GHZ}(t)
= \frac{T}{t} F_{Q}^{\rm GHZ}(t)
= T\,N^2 t\,e^{-2N^2\chi(t)}.
\label{eq:QFI_GHZ_short}
\end{equation}
Equation (\ref{eq:QFI_GHZ_short}) captures the competition between  phase accumulation, scaling as $N^2 t^2$, and decoherence, which suppresses the signal exponentially in $N^2\chi(t)$. Assuming a power-law behavior for the decay, Eq.\,(\ref{eq:chi_powerlaw}),
we maximize $F_{Q\,\rm tot}^{\rm GHZ}(t)$ with respect to $t$.  This yields an optimal interrogation time and overall upper bound:
\begin{equation}
\!\!\!\!\!\!\!\!\!\!\!\!\!\!\!t_{\rm GHZ}^{\star}
   = \frac{1}{\chi_0 \omega_c}
\left(\frac{1}{2 n\,\,}\right)^{\!1/n} N^{-2/n}, \quad  F_{Q\,\rm tot}^{\rm  GHZ}(N,T)
=
\frac{T}{\chi_0\,\omega_c}\,
e^{-1/n} (2n)^{-1/n}\,
N^{\,2-\frac{2}{n}}.
\label{eq:QFI_GHZ_tot_opt}
\end{equation}
 
\begin{figure*}[t!]
\centering
\includegraphics[width=16.5cm]{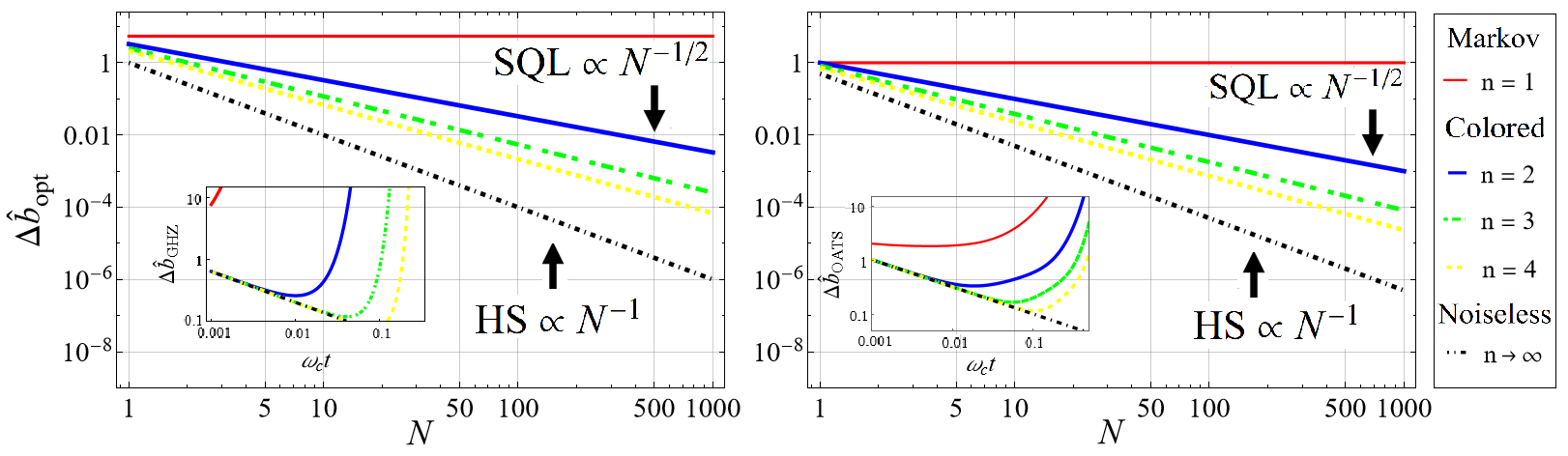} 
 \vspace*{-5mm}
\caption{{\bf Performance of GHZ and spin-squeezed inputs subject to collective dephasing} 
Left: Optimal precision of a GHZ state, Eq.\,(\ref{dbGHZ}), as function of probe number for different short-time behavior of the decay coefficient, $\chi(t)=\chi_0^n (\omega_c t)^n$. The precision saturates the scaling bound in Eq.\,(\ref{eq:QFI_summary_bounds}), including $N$-independent precision for Markovian noise (solid, red), SQL bound for colored, stationary noise (solid, blue), with HS achieved in the noiseless case (black, dot-dashed). Inset: Uncertainty $\Delta \hat{b}(t)$ as a function of time for fixed system size $N=50$. 
Right: Same plot and inset for the optimal perfect echo protocol using OATS, with input squeezing and rotation strength as in \cite{Monika}, see Eq.\,(\ref{eq:QFI_OATS_opt}) and Table\,\ref{table:0}.
Besides being less sensitive to preparation errors, OATS outperform GHZ inputs by a constant factor.  
Parameters:  $\omega_c=1$, $\chi_0^2=1$.}
\label{GHZplot} 
\end{figure*}

\noindent
Replacing into the QCRB, Eq.\,(\ref{eq:QCRB}), the optimal precision follows:
\begin{equation}
   \quad \qquad \qquad  \Delta \hat{b}_{\rm opt}^{\rm GHZ}\geq \sqrt{\frac{\chi_0\omega_c}{T}} \,e^{1/2n} (2n)^{1/2n}\,N^{ -\frac{n-1}{n}}. \label{dbGHZ}
\end{equation}

Comparing Eq.\,(\ref{eq:QFI_GHZ_tot_opt}) with the state-independent bound (\ref{eq:QFI_summary_bounds}), we find that GHZ states reach the fundamental scaling imposed by collective dephasing, up to a noise-dependent constant prefactor. A measurement strategy saturating the QCRB consists in detecting an observable diagonal in the  $L$ eigenbasis, such as the parity operator \cite{SmerziRMP}.
Despite such a favorable scaling, GHZ states concentrate all the metrological sensitivity into a single macroscopic coherence. This makes them particularly susceptible to imperfect preparation and limits their applicability for large probe ensembles \cite{SmerziRMP}.

\subsection{Interferometry with one-axis twisted spin-squeezed states}
\label{OATSstates}

Unlike GHZ states,  spin-squeezed states distribute phase information over many collective degrees of freedom, which improves robustness against implementation errors \cite{SchulteEchoes}. In the limit of an asymptotically large number of probes we consider, $N\rightarrow\infty$, sufficiently polarized  squeezed states behave as a single bosonic mode. The noisy dynamics become effectively Gaussian in phase space, allowing 
for an analytic derivation of the QFI and optimal measurement strategy. Our approach entails two key steps \cite{APLQ}:
(i) Map the discrete spin system to an effective continuous-variable model using the Holstein-Primakoff (HP) transformation.
(ii) Compute the QFI exactly within the resulting Gaussian framework.

\subsubsection{Holstein-Primakoff description of OATS inputs.}
For convenience, we rotate the frame so that both the signal and the noise couple to $J_x$ rather than $J_z$:
\begin{equation}
\qquad \quad H(t) = (b + \xi(t)) J_x, \qquad \rho_0 \mapsto R_y(\pi/2)\rho_0 R_y^\dagger(\pi/2).
\label{eq:H_x}
\end{equation}
This yields an equivalent metrological setting to that of Eq.\,(\ref{Ham}).
We consider initial states belonging to the one-axis twisted family
\begin{equation}
\qquad \qquad \ket{\mathrm{OATS}}_z
= R_z(\beta)\, T_\mu \ket{\mathrm{CSS}}_z,
\qquad
T_\mu = e^{-i\mu J_x^2},
\label{eq:OATS_def}
\end{equation}
where $\ket{\mathrm{CSS}}_z$ is a coherent spin state polarized along $\hat{z}$,
$\mu$ denotes the squeezing strength, and $\beta$ rotates the squeezing axis.
We introduce the bosonic creation and annihilation operators $\{a^{\dagger}, a\}$ ($[a^{\dagger},a]=1$), with associated dimensionless position and momentum operators, 
\begin{equation}
\qquad \qquad \hat{x}\equiv \frac{1}{\sqrt{2}}(a+a^{\dagger}),\qquad\hat{p}\equiv \frac{i}{\sqrt{2}}(a^{\dagger}-a) .   
\end{equation}
Then, we rewrite the collective spin operators in bosonic language by means of the HP transformation \cite{SmerziRMP,Durkin2,APLQ}: 
\begin{equation}
\qquad \qquad \hat{J}_x \mapsto \sqrt{J}\,\hat{x}, \quad \hat{J}_y\mapsto \sqrt{J}\,\hat{p}, \quad \hat{J}_z \mapsto  J-a^{\dagger}a    \label{HP},
\end{equation}
 The mapping in Eq.\,(\ref{HP}) is not exact: it only provides a faithful description of the dynamics if a \emph{low excitation condition} is obeyed at all times,  $\langle a^{\dagger}a (t)\rangle \ll 2J$. Physically, the state must be sufficiently localized to approximate the evolution as occurring in a plane tangent to the North pole of the Bloch sphere. This sets an upper bound to the strength $\mu$ entering 
 Eq.\,(\ref{eq:OATS_def}): $0 \leq |\mu| \lesssim (2J)^{-1/2}$ \cite{SmerziRMP}. In the low-excitation regime,  the dynamics are then ruled by an effective bosonic Hamiltonian: 
\begin{equation}
   \qquad \qquad H_{\rm S}(t) \mapsto H_{\rm HP}(t)=\sqrt{J}\,\hat{x}\,[b+\xi(t)],
    \label{HHP}
\end{equation} 
as obtained by transforming $H(t)$ in Eq.\,(\ref{eq:H_x})  according to Eq.\,(\ref{HP}). Likewise, the initial OATS is mapped to a bosonic 
state
\begin{equation}
    |{\rm OATS}\rangle_{\hat{z}} \mapsto |\Psi\rangle= e^{-i\beta (J-a^{\dagger}a)} e^{-i \kappa  \hat{x}^2} |0\rangle, \qquad |{\rm CSS}\rangle_{\hat{z}} \mapsto|0\rangle, \qquad  \kappa= J \mu\label{transfOATS},
\end{equation}
which is Gaussian in a sense we specify below.

\subsubsection{Noise-averaged evolution and covariance matrix.}
The initial state and the noisy dynamics are conveniently described in phase space. We make this mapping explicit via the Weyl transform, which associates to each operator $\mathcal{A}$ a phase-space function $f_{\mathcal A}(x,p)$:
\begin{equation}
\qquad \qquad f_\mathcal{A}(x, p) = \frac{1}{\pi} \int_{-\infty}^{\infty} \!\!\dd y \, \langle x+y | \mathcal{A} | x - y \rangle \, e^{-2 i p y}. 
\label{WeylT}
\end{equation}
This yields an alternative formulation of quantum mechanics that closely parallels classical phase-space dynamics and, in our setting, facilitates intuition.
Applying Eq.\,(\ref{WeylT}) to $\rho_0= |\Psi\rangle \langle \Psi|$ we obtain the Wigner function,  an equivalent representation  of the state in terms of $(x,p)$.  The Wigner function of input $|\Psi\rangle$ is zero-mean and Gaussian:
\begin{equation}
    \qquad \quad W_0(x,p)= \frac{1}{2\pi} \exp \left\{ - \frac{1}{2}  \left[ \delta^{-1} x^2 + \delta(p + 2 \eta x)^2 \right]\right\}. \label{W0}
\end{equation}
The parameters $\delta$ and $\eta$ denote a quadrature squeezing and quadrature shear, respectively, depending on the input variables $(\beta,\mu)$:
\begin{equation}
\delta=1+4\kappa^2 \sin^2(\beta)-2\kappa\sin(2\beta),\quad
\eta=\frac{\kappa^2\sin(2\beta)-\kappa\cos(2\beta)}{\delta}.\label{eq:deltaeta}
\end{equation}
Equivalently, $W_0(x,p)=\mathcal{N}_{\vec{r}}({\vec \mu}_0, \Sigma_0)$, with ${\vec{r}}=(x,p)^{\top}$. The bivariate normal distribution has mean  vector and covariance matrix given by:
\begin{equation}
\begin{array}{rcl}
\vec{\mu}_0
&=&
\left(
\begin{array}{c}
\langle \hat{x} \rangle \\
\langle \hat{p} \rangle
\end{array}
\right)
=
\left(
\begin{array}{c}
0 \\
0
\end{array}
\right),
\\[2ex]
\Sigma_0
&=&
\left(
\begin{array}{cc}
\Delta^2 \hat{x} & {\rm Cov}(x,p) \\
{\rm Cov}(x,p) & \Delta^2 \hat{p}
\end{array}
\right)
=
\left(
\begin{array}{cc}
\delta & -2\eta\delta \\
-2\eta\delta & \delta^{-1}+4\eta^2\delta
\end{array}
\right).
\end{array}
\label{in}
\end{equation}
for $\Delta^2 \hat{y}= \langle \hat{y}^2 \rangle - \langle \hat{y}\rangle^2 $, $y \in \{\hat{x}, \hat{p}\}$, ${\rm Cov} (\hat{x},\hat{p})=\frac{1}{2} \{ \hat{x}, \hat{p}\}-  \langle \hat{x} \rangle \langle \hat{p} \rangle$.

Since the transformed Hamiltonian in Eq.\,(\ref{HHP}) is linear in $\hat x$, the evolution displaces the state along $\hat p$. We can fully describe the state through the drift of the mean and the diffusion of the covariance of a bivariate normal distribution. 
For a single realization of the stochastic process, the mean drifts by a deterministic amount $bt$ and a stochastic contribution $\lambda$,
\begin{equation*}
  \qquad \quad  {\vec\mu}_t(b)= \sqrt{J} \begin{pmatrix}
        0\\
        b t+ \lambda
    \end{pmatrix}, \qquad \lambda(t) = \int_0^t\!\!\dd s \,\xi(s),
\end{equation*}
whereas the covariance matrix remains invariant.  
Fluctuations are, by assumption, also Gaussian, thus the random phase accumulation can be averaged by computing the expectation value of  $W_t(x,p)= \mathcal{N}_{\vec{r}}({\vec\mu}_t, \Sigma_0)$ with respect to the probability 
distribution $\lambda(t) \sim\mathcal N_{\lambda}(0,\chi(t))$, in analogy to Eq.\,(\ref{eq:stochastic_map}) for a discrete qubit setting and finite-dimensional Hilbert space. 
Integration over $\lambda$ resolves the random component of the mean vector. Meanwhile, the covariance matrix picks up an additive, rank-one  contribution that induced diffusion along the $\hat{p}$ axis:
\begin{equation}
\begin{array}{rcl}
\!\!\!\!\!\!\!\!\!\!\!\!\!\tilde{\vec{\mu}}_t(b)
&=&
\sqrt{J}
\left(
\begin{array}{c}
0 \\
b t
\end{array}
\right),
\qquad
\Sigma_t
=
\left(
\begin{array}{cc}
\delta & -2\eta\delta \\
-2\eta\delta & \delta^{-1}+4\eta^2\delta
\end{array}
\right)
+ J\chi(t)
\left(
\begin{array}{cc}
0 & 0 \\
0 & 1
\end{array}
\right).
\end{array}
\label{final}
\end{equation}
The noise-averaged state is $\overline{W}_t(x,p)= \mathcal{N}_{{\bf r}}({\tilde{ \vec \mu}}_t, \Sigma_t)$.

\subsection{Optimal interferometric protocols}

\begin{figure*}[t!]
\centering
\hspace*{18mm}\includegraphics[width=15cm]{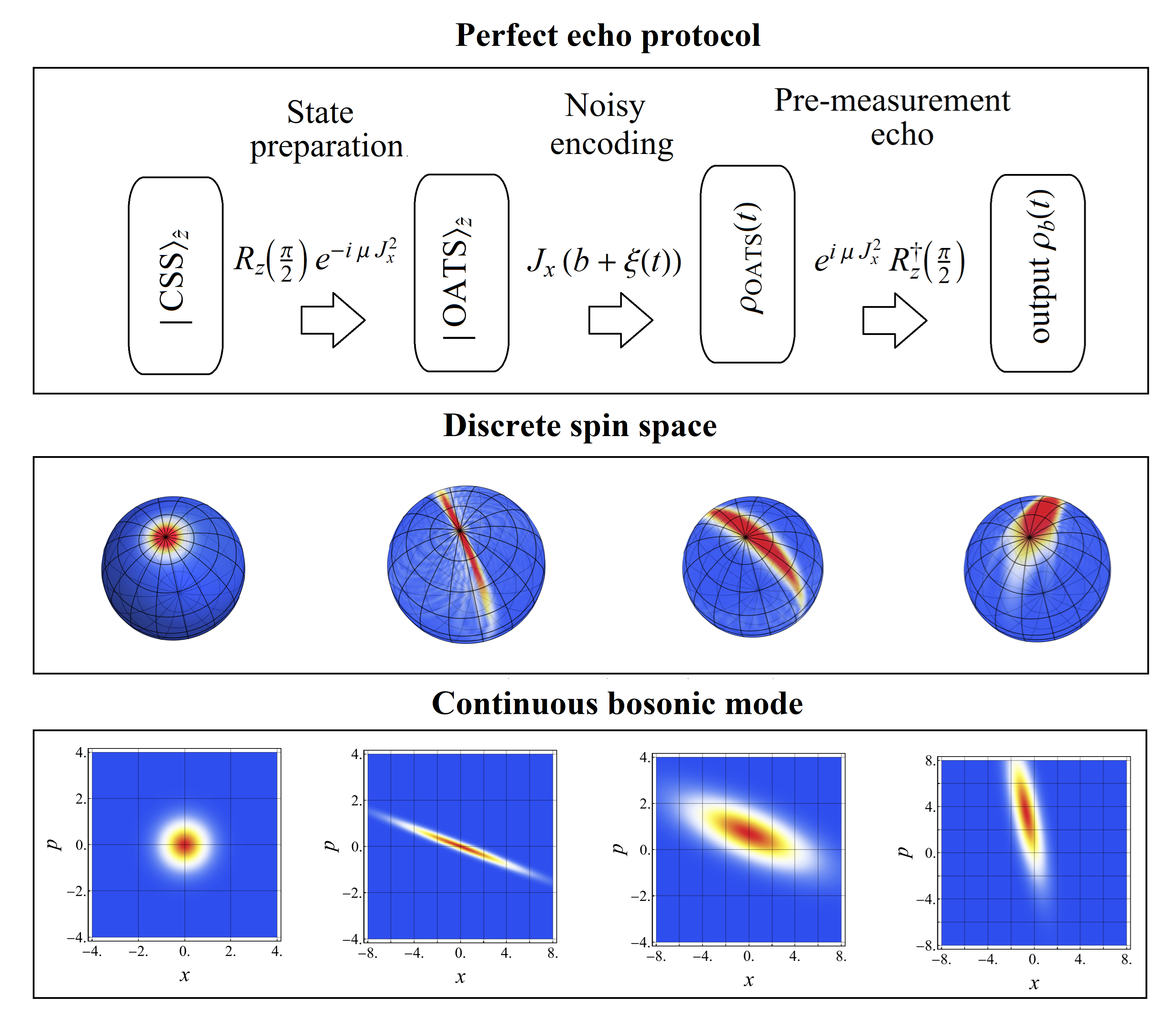}
 \vspace*{-6mm}
\caption{{\bf Optimal interferometric protocol using spin-squeezed states subject to collective dephasing.} 
Top row: Noisy generalized Ramsey protocol using $N=2J$ quantum sensors, with precision saturating the scaling bound in Eq.\,(\ref{eq:QFI_summary_bounds}). The input state is prepared by successively applying a squeezing $e^{-i \mu J_x^2}$ $(\mu=\sqrt{N})$ and rotation $e^{-i \frac{\pi}{2}J_z}$ to a CSS, resulting in the input state $U_{\rm sq} |\rm CSS\rangle_{\hat{z}}=|\rm OATS\rangle_{\hat{z}}$. The signal is imprinted after evolution under the noisy Hamiltonian $H(t)= J_x (b+\xi(t))$ for an encoding time $t$. Fluctuations $\xi(t)$ induce diffusion of strength $J \chi_0^n (\omega_c t)^n$ along the $J_y$ axis. Finally, $U_{\rm sq}^{\dagger}$ un-does the initial squeezing, resulting in output $\rho_b(t)$. Measurement of the collective spin component $J_y$ is optimal.  The second and third rows represent the evolution of the state in phase space  before and after performing the HP transformation, respectively. Parameters:  $J=50$, $b=1$, $n=2$, $\omega_c=1$, $\chi_0^2=\frac{1}{4}$, $t=0.1$.}
\label{QCRB} 
\end{figure*}

For single-mode Gaussian states with mean displacement $\vec{\mu}_t(b)$ and parameter-independent covariance matrix $\Sigma_t$, 
the single-shot SLD and QFI take the form \cite{Pinel,HaidongReview}
\begin{equation}
L= \partial_b\vec{\mu}_t(b)^{\,\top}\,\Sigma_t^{-1} \,{\vec{r}}, \qquad \QFI(t)
=
\partial_b \vec{\mu}_t(b)^{\,\top}\,\Sigma_t^{-1}\,\partial_b \vec{\mu}_t(b).
\end{equation}
Evaluating for the mean vector and covariance matrix in Eq.\,(\ref{final}), the SLD becomes
\begin{equation}
    L= t \frac{\sqrt{J} \delta}{1+ J \delta \,\chi(t)} (\hat{p}+2 \eta \hat{x}) \propto (\hat{p}+2 \eta \hat{x}).
\end{equation}
Therefore, measurement of $O=\hat{p}+2 \eta \hat{x}$ saturates the QCRB, independent of quadrature squeezing. Mapping back to spin space, $O$ corresponds to implementing an anti-squeezing operation of strength $\eta$ before readout of a collective spin component: $O= e^{-i \eta J_x^2} \,J_y \,e^{i \eta J_x^2}$. This is illustrated in Fig.\,\ref{QCRB}, where the anti-squeezing is factored into the output state. This type of generalized Ramsey protocols have been investigated in the absence of noise during encoding \cite{SchulteEchoes,Monika}. By contrast, the total QFI,
\begin{equation}
F_{Q\,\rm tot}^{\rm OATS}(N,T)
=
\frac{J \delta\, T t}{1+J \delta\,\chi(t)} =\frac{N \delta\, T t}{2+N \delta\,\chi(t)},
\label{eq:QFI_tot_OATS}
\end{equation}
is a function of $\delta$ but is completely independent of quadrature shear.
Equation\,(\ref{eq:QFI_tot_OATS}) can be optimized with respect to time in the short-time limit $(\omega_c t \ll 1)$, assuming a power-law behavior for the decay coefficient. Once again, this splits into two cases.


\paragraph{Markovian noise ($n=1$).} When $\chi(t)= \chi_0 (\omega_c t)$, Eq.\,(\ref{eq:QFI_tot_OATS}) is a strictly increasing function of the encoding time. The upper bound
\begin{equation}
    F_{Q\,\rm tot}^{\rm OATS\,}(N,T)= \frac{N \delta T t}{2+N \delta \chi_0 \omega_c t} \leq T \,\sup_{t>0} \frac{N \delta  t}{2+N \delta \chi_0 \omega_c t}= \frac{T}{\chi_0\omega_c},
    \label{QFImark}
\end{equation}
is asymptotically tight and approximately saturated by measuring at times  $(\chi_0 N \delta)^{-1} \lesssim \, \omega_c t$. 
As  Eq.\,(\ref{QFImark}) is state-independent, squeezing does not improve  precision over the coherent spin state.

\paragraph{Colored noise ($n \geq 2$).} Maximizing Eq. (\ref{eq:QFI_tot_OATS}) for $\chi(t) \simeq \chi_0^n (\omega_c t)^n$, we find a global optimal encoding time and  corresponding maximal QFI:
\begin{equation}
\begin{array}{rcl}
t^\star_{\rm OATS}
&=&
\frac{1}{\omega_c \chi_0}
\left[
\frac{2}{(n-1)}
\right]^{1/n}
(N\,\delta)^{-1/n},
\\[2ex]
F_{Q\,\rm tot}^{\rm OATS}(N,T)
&\leq&
\frac{T}{\omega_c \chi_0}
\frac{(n-1)^{\frac{n-1}{n}}}{n}\,
\left(\frac{N}{2}\delta\right)^{-\frac{1-n}{n}}.
\end{array}
\label{eq:QFI_OATS_opt}
\end{equation}
The bound depends on the input only through the quadrature squeezing $\delta$, given in Eq.\,(\ref{eq:deltaeta}).

\smallskip

It is clear from Eq.\,(\ref{eq:QFI_OATS_opt}) that optimal performance requires $\delta$ to be as big as possible while still obeying the low excitation condition $\langle a^{\dagger}a \rangle \ll 2J$. This corresponds to the values $(\mu^{\star}= N^{-1/2}, \beta^{\star}= -\pi/2)$, such that $\delta^{\star}= N$ -- at the boundary of the HP regime of validity. Computing the quadrature shear, we find that $\eta^{\star}=- J \mu$ and, in spin space,  the optimal measurement consists of exactly undoing the initial squeezing: $O=e^{i \mu J_x^2 } J_y e^{-i \mu J_x^2}$. This input state and detection strategy define a \emph{perfect echo} protocol, shown to reach HS in the noiseless setting \cite{Monika}. Collective dephasing does not alter the structure of the optimal interferometric protocol, only the achievable scaling. Table\,\ref{table:0} summarizes the asymptotic performance of three representative members of the OATS class, Eq.\,(\ref{eq:OATS_def}). 

\begin{table*}[t]
 \centering
\begin{tabular}{||c| c c c c c ||} \hline
Initial & Squeezing &  Rotation  &  QCRB   & QCRB & \; QCRB   \\ [0.5ex] 
state & angle, $\mu$  & angle, $\beta$  & colored 
noise  & white noise & noiseless \\ [0.5ex] 
 \hline\hline
$\rm CSS$ & 0 & $0$  &  $\left(6 \,\sqrt{3}\, \right)^{1/2}  N^{-5/4} $ & $1$ & $N^{-1/2}$\\
$\;\;\rm  KU\, OATS\;\;$ & $\;\,3^{1/6}  N^{-2/3}$ & $\;\;\pi/2- 3^{-1/6}\,N^{-1/3}$ & $\sqrt{2}\,(1/3)^{1/12}\, N^{-5/12}  $  & $1$  & $N^{-5/6}$\\
 PE\,OATS & $N^{-1/2}$ & $-\pi/2$  & $2^{1/2}\,N^{-1/2}$ & $1$  & $\sqrt{e}\,N^{-1}$\\ 
 GHZ &  $-$ & $-$ &  $\,(2 e)^{1/4} N^{-1/2}$ & $\sqrt{e}$ & $N^{-1}$ \\ [1ex] 
 \hline
\end{tabular}
\vspace*{1mm}
\caption{
{\bf Asymptotic interferometric performance of input states for linear signal encoding  under collective dephasing.}  The dynamics are generated by $H(t)= (b+ \xi(t)) \hat{x}$. In the first three rows, the input is given by Eq.\,(\ref{eq:OATS_def}), with squeezing and rotation angles $\mu$,  $\beta$ specified in the second and third columns. KU denotes the OATS with initial minimal $\hat{y}$-dispersion considered by Kitagawa and Ueda \cite{Kita1993}, whereas PE denotes OATS used in echo-based protocols \cite{Monika}. The time-optimized QCRB (in units of $(\chi_0 \omega_c/T)^{1/2}$) is provided for colored noise stationary, white noise (fourth and fifth columns) and in the noiseless limit (last column, units of $(T\, t)^{1/2}$). Precision bounds for the $N$-entangled GHZ state (fourth row) are also included.}
\label{table:0}
\end{table*}

\section{Fundamental precision bounds under collective dephasing and control}
\label{openloop}

\subsection{Open-loop control assumptions}

Supplementing the sensors with open-loop control does not allow precision beyond the HS limit in noiseless settings \cite{boixo2007}. However, when the system is subject to dephasing, both pulsed rotations and continuous driving have been shown to improve the {\em single-shot} QFI relative to the uncontrolled evolution for systems comprising few qubits \cite{Haidong2017,zhai2023control2}. An outstanding open question is whether such gains scale with the probe number $N$. In this section, we provide an answer within the setting introduced in Sec.\,\ref{sett}. Concretely, we assume that the available control is collective, hence it leaves the dynamics permutation-invariant and, as before, we work with a fixed total runtime $T$ to carry out the protocol and a fixed (parametric) large number of probes $N$.

We model the external control in terms of sequences of ideal instantaneous pulses, with no restrictions on their number $Q$ or timing separation $0<t_1<\ldots t_Q<t$ within the encoding period. While these assumptions are highly idealized, allowing for this level for control enables the identification of fundamental limits.  
This philosophy is close in spirit to the ``full and fast quantum control'' setting of \cite{Sekatski}, where ancillary resources and QEC are additionally considered. The controlled noisy Hamiltonian is then 
\begin{eqnarray}
H(t) &=& H_0(t) + H_{\rm ctrl}(t), \qquad
H_{\rm ctrl}(t)
= \sum_{k=1}^{Q} \theta_k \, \delta(t - t_k)\,
\mathcal{O}_k.
\label{Ht}
\end{eqnarray}
where $\theta_k$ are arbitrary rotation angles.
By assumption, the Hermitian operators $\mathcal{O}_k$ act on the permutation-invariant (collective) sector, so they can be arbitrary functions of the spin components $(J_x,J_y,J_z)$. Any no-go result derived in this setting applies to more restricted control families, for instance, linear collective control $H_{\rm ctrl}(t)= u_x(t) J_x+u_y(t) J_y+u_z(t) J_z$. In fact, the pulsed model in Eq.~(\ref{Ht}) also captures time-dependent continuous driving as a limiting case. Concretely, for a control Hamiltonian with continuous control inputs, of the form
\begin{equation}
\qquad \qquad H_{\rm ctrl}(t)=\sum_{k=1}^{Q} u_k(t)\,\mathcal O_k, \quad u_k(t) \in {\mathbb R}, \,\forall t, 
\label{eq:continuous_ctrl}
\end{equation}
we can approximate the time-ordered propagator by a product formula on a time-slicing of the finite interval $[0,t]$ with step $\Delta t$. We can interpret each step as an instantaneous pulse of area $\Delta t\,u_k(t_\ell)$, applied at time $t_\ell$, which yields Eq.\,(\ref{Ht}) in the infinitesimal limit $\Delta t\to 0$ (see Appendix~\ref{app:continuous_limit} for more detail). 

\subsection{Random-unitary representation of controlled dynamics}
\label{subsec:multilabel}

Let $0=t_0<t_1<\cdots<t_{Q'}=t$ be the segmentation induced by the pulses, with $Q'\equiv Q+1$ free-evolution intervals and $\Delta t_j\equiv t_{j+1}-t_j$ a free-evolution segment, defining the vector $\Delta \vec t \in \mathbb{R}^{Q'}$. Similarly, let us collect the random phase accumulation in each segment into a vector $\vec\lambda\in\mathbb{R}^{Q'}$, with components  $\lambda_j \equiv \int_{t_j}^{t_{j+1}}\!\dd s\,\xi(s)$. Since the process $\xi(t)$ is Gaussian, $\vec\lambda$ is a centered multivariate Gaussian random variable, $\vec\lambda\sim\mathcal N_{\vec\lambda}(\vec 0,\Sigma(t))$. Its covariance matrix $\Sigma \succ 0$ has entries given by $\chi_{ij}= \langle \lambda_i \lambda_j \rangle$. To characterize how control acts on these labels, we introduce the cumulative pulses and the corresponding toggling-frame generators
\begin{equation}
\quad \qquad P_0\equiv\mathbb I,
\qquad
P_j\equiv \mathcal U^{(j-1)}\cdots \mathcal U^{(0)}, \qquad G_j\equiv P_jJ_zP_j^\dagger.
\label{eq:toggling_generators}
\end{equation}
with $\mathcal{U}^{(k)} \equiv e^{-i\,\theta_k \,\mathcal{O}_k}$ being the $k$th instantaneous pulse. A consecutive block of segments $\mathcal{B}=\{r,\ldots,s\}$ of length $\Delta t_\mathcal B \equiv t_{s+1}-t_r$ is \emph{dephasing-preserving} (DP) if there exist a Hermitian collective operator $G_{\mathcal B}$ and signs $y_j\in\{\pm1\}$ such that $G_j=y_j\,G_{\mathcal B},$ $j\in\mathcal B.$ In such a case, all the generators commute and the block depends on the segment labels only through the combinations
\begin{equation}
\quad \qquad\lambda_{\mathrm{eff}}^{(\mathcal B)}
\equiv \sum_{j=r}^{s} y_j\lambda_j,
\qquad
\Delta t_{\mathrm{eff}}^{(\mathcal B)}
\equiv \sum_{j=r}^{s} y_j\Delta t_j,
\label{eq:effective_label_block}
\end{equation}
with the resulting variable $\lambda_{\mathrm{eff}}^{(\mathcal B)}$ being also Gaussian.

A general control sequence may contain DP blocks -- as arising, for instance, from sequences of $\pi_x$ or $\pi_y$ pulses most commonly employed in single-axis DD protocols \cite{violalloyd, ViolaDD, MultiDD} -- interspersed with genuinely non-commuting segments -- whereby ${\cal U}^{(k)}$ involves rotations with arbitrary axis and angle, or possibly nonlinear control operations. Compressing each maximal DP block to its effective label and keeping the remaining segments unchanged yields an ordered reduced label vector $\mathcal{S}\vec{\lambda}\equiv \vec\Lambda=(\Lambda_1,\dots,\Lambda_{\widetilde Q})^\top $. Here, $\mathcal{S}\in\mathbb{R}^{\widetilde Q\times Q'}$ is a full-row-rank matrix defining the linear compression, such that $\Lambda_{\alpha}=\lambda_k$ for a single segment, and $\Lambda_\alpha= \lambda_{\mathrm{eff}}^{(\mathcal B)}$ for a maximal DP block. The compressed random vector remains  Gaussian-distributed, $\vec\Lambda\sim\mathcal N(\vec 0,\widetilde\Sigma(t))$, with a covariance matrix $\widetilde \Sigma(t)\equiv  \mathcal{S} \,\Sigma(t) \,\,\mathcal{S}^{\top}\succ 0 \in \mathbb{R}^{\widetilde Q \times \widetilde Q}$, with entries $\chi_{ij}(t)  =\langle \Lambda_i \Lambda_j \rangle.$ The noise-averaged evolution in the presence of the applied control can be written as a multilabel convex unitary mixture:
\begin{equation}
\rho_b(t)=\int_{\mathbb R^{\widetilde Q}}\dd\vec\Lambda\;
\mathcal N(\vec 0,\widetilde\Sigma(t))\;
U_{\vec\Lambda}(t)\,\rho_0\,U^{\dagger}_{\vec\Lambda}(t), \qquad U_{\vec\Lambda}(t)
\equiv \prod_{\alpha=1}^{\widetilde Q}\mathcal V_\alpha(\Lambda_\alpha),
\label{eq:compressed_multilabelRU}
\end{equation}
where $\mathcal V_\alpha$ is the rank-one factor
$e^{-iG_{\mathcal B}(b\,\Delta t_{\mathrm{eff}}^{(\mathcal B)}+\Lambda_\alpha)}$ for a DP block and the original single-segment factor $\mathcal U^{(k)}e^{-iJ_z(b\,\Delta t_k+\Lambda_\alpha)}$ otherwise.

\subsection{No-go for superclassical precision scaling under controlled collective dephasing}
\label{subsec:DDnogo}

Our goal is now to derive a {\em state-independent} upper bound on the time-optimized QFI in the controlled setting defined by Eq.\,(\ref{Ht}). We follow and generalize the fidelity-based strategy of Ref.~\cite{DurDD}, consisting of five key steps: 
(i) represent the controlled evolution as a Gaussian mixture over the segment-integrated labels $\vec\Lambda$; 
(ii) reparameterize the labels so that the two compared states share the same unitary branches; 
(iii) lower bound the quantum fidelity by the classical fidelity between the corresponding Gaussian weights; 
(iv) convert this lower bound into an upper bound on the QFI; 
(v) (for stationary colored noise) combine with the noiseless HS ceiling and optimize over the interrogation time $t$.

\subsubsection{QFI bounds under control.}
Following the above outline, we prove two no-go results precluding sub-SQL performance when the noise is stationary and i) temporally uncorrelated with a white spectrum or ii) temporally correlated with a colored spectrum. In the latter case, relevant constant-factor improvements in finite-$N$ experiments are not discarded. Both results hinge on the following key result: 

\begin{theorem}
[Upper bound to the QFI] 
\label{thmquadf} 
Let $0< t_1<t_2 <\ldots <t_{ Q}<t$ be the segmentation of an arbitrary sequence of instantaneous collective pulses whose output state is given by Eq.\,(\ref{eq:compressed_multilabelRU}). Then, the single-shot QFI is upper-bounded by a quadratic form as follows:
$$F_Q[\rho_b(t)] \leq \Delta {\vec{t}}^{{}\,\top} \Sigma^{-1} \Delta \vec t,$$ where $\Sigma \succ 0 \in \mathbb{R}^{Q' \times Q'}$ and $\Delta \vec t \in \mathbb {R}^{Q'}$  are the \emph{uncompressed} covariance matrix of the random phase accumulation vector $\vec \lambda$ and the segmentation vector, respectively. 
\end{theorem}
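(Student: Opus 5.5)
The proof will follow the fidelity-based route outlined in steps (i)--(iv). We work with the uncompressed multilabel representation. Before compressing DP blocks, $\rho_b(t)$ is a Gaussian mixture over $\vec\lambda\sim\mathcal N(\vec 0,\Sigma(t))$ of unitary branches
$$U_{\vec\lambda}(b)=\mathcal U^{(Q)}e^{-iJ_z(b\Delta t_{Q}+\lambda_{Q})}\cdots\mathcal U^{(1)}e^{-iJ_z(b\Delta t_0+\lambda_0)}.$$
The structural observation is that $b$ and $\vec\lambda$ enter each segment only through the combination $b\,\Delta t_j+\lambda_j$. Hence $U_{\vec\lambda}(b)=V(b\,\Delta\vec t+\vec\lambda)$ for a single $b$-independent map $V:\mathbb R^{Q'}\to$ unitaries.

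For step (ii), we change variables to $\vec\phi=b\,\Delta\vec t+\vec\lambda$. This gives
$$\rho_b=\int \dd\vec\phi\; p_b(\vec\phi)\,V(\vec\phi)\rho_0V(\vec\phi)^\dagger,\qquad p_b=\mathcal N(b\,\Delta\vec t,\Sigma).$$
So $\rho_b$ and $\rho_{b+\delta b}$ are mixtures of the same pure states $V(\vec\phi)\rho_0V^\dagger(\vec\phi)$, and only the classical weights depend on $b$. For step (iii), we use joint concavity (or, equivalently, monotonicity under the CPTP map that discards a classical register). Consider the extended states $\sigma_b=\int \dd\vec\phi\,p_b(\vec\phi)\,|\vec\phi\rangle\langle\vec\phi|\otimes V\rho_0V^\dagger$, where $|\vec\phi\rangle$ is a classical register. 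Tracing out the register gives $\rho_b$. Fidelity is non-decreasing under partial trace, so $\mathcal F(\rho_b,\rho_{b+\delta b})\ge \mathcal F(\sigma_b,\sigma_{b+\delta b})=\int\sqrt{p_bp_{b+\delta b}}\,\dd\vec\phi$. The last expression is the Bhattacharyya coefficient of two Gaussians with equal covariance, namely $\exp[-\delta b^2\,\Delta\vec t^{\top}\Sigma^{-1}\Delta\vec t/8]$. Equivalently, the QFI is monotone under this channel, so $F_Q[\rho_b]\le F_Q[\sigma_b]=F_{\rm cl}[p_b]$. The classical Fisher information of a Gaussian location family $\mathcal N(b\,\Delta\vec t,\Sigma)$ is exactly $\Delta\vec t^{\top}\Sigma^{-1}\Delta\vec t$. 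Expanding Eq.\,(\ref{QFIFid3}) to second order then gives step (iv) and the claim.

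The main obstacle is rigor in the change of variables and the handling of $\Sigma$. First, $\Sigma\succ0$ must hold so that the shifted Gaussian is a density on $\mathbb R^{Q'}$. This is true for nondegenerate noise, and the statement assumes it; if it failed, the compressed $\widetilde\Sigma$ would be used instead. Second, we must argue why the uncompressed bound is valid even though the paper states the output in compressed form, Eq.\,(\ref{eq:compressed_multilabelRU}). Compression is a linear pushforward $\vec\Lambda=\mathcal S\vec\lambda$ of the same mixture, so the uncompressed representation describes the identical state and the argument applies directly. A remark can note that $\Delta\vec t^{\top}\mathcal S^{\top}\widetilde\Sigma^{-1}\mathcal S\Delta\vec t\le \Delta\vec t^{\top}\Sigma^{-1}\Delta\vec t$ by data processing, so the compressed version is tighter. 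Finally, we should check that the state-independence comes for free: $\rho_0$ and the pulses $\mathcal U^{(k)}$ appear only inside $V$ and never in the weights.
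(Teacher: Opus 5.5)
Your proof is correct. Its core is the same as the paper's: shift the noise labels by $\delta b$ times the segment durations so that $\rho_b$ and $\rho_{b+\delta b}$ become mixtures over the same unitary branches. Then bound the fidelity from below by the Bhattacharyya overlap of two Gaussians with equal covariance.

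The difference is which labels you run this on. The paper works in the compressed labels $\vec\Lambda=\mathcal S\vec\lambda$. It first obtains the intermediate bound $\Delta\vec\tau^{\top}\widetilde\Sigma^{-1}\Delta\vec\tau$. It then needs Lemma~\ref{lem:compression_monotonicity}, proved by showing that $\Sigma^{1/2}\mathcal S^{\top}(\mathcal S\Sigma\mathcal S^{\top})^{-1}\mathcal S\Sigma^{1/2}$ is an orthogonal projector, to reach the uncompressed quadratic form in the statement.

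You note that the uncompressed Gaussian mixture is an equally exact representation of the same state. The bound in the statement therefore follows directly, and the lemma is not needed for the theorem at all. Your data-processing remark also gives a one-line proof of that lemma. The pushforward $\mathcal S\vec\phi\sim\mathcal N(b\,\mathcal S\Delta\vec t,\widetilde\Sigma)$ cannot carry more classical Fisher information than $\vec\phi$. This yields $\Delta\vec\tau^{\top}\widetilde\Sigma^{-1}\Delta\vec\tau\le\Delta\vec t^{\top}\Sigma^{-1}\Delta\vec t$ without any projector computation.

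Your variant $F_Q[\rho_b]\le F_Q[\sigma_b]=F_{\rm cl}[p_b]$ additionally avoids the second-order fidelity expansion and its remainder term. What the paper's ordering keeps explicit is the tighter compressed bound, which is relevant for sequences containing DP blocks. Your remark recovers that bound as well.

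One terminological correction. The property you need is strong concavity of the fidelity, with weights $\sqrt{p_k q_k}$ for two different distributions, not joint concavity, which requires equal weights. Your classical-register argument (monotonicity of fidelity under discarding the register) proves exactly the strong-concavity inequality, so nothing in the argument breaks.
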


\begin{proof}
As in Sec.\,\ref{QFIg}, we express the QFI in terms of the fidelity of two outputs with infinitesimally differing values of the target parameter, Eq.\,(\ref{QFIFid3}):
\begin{equation}
\qquad \mathcal{F}\bigl(\rho_b(t),\rho_{b+\delta b}(t)\bigr)
=1-\frac{1}{8}F_Q[\rho_b(t)]\,\delta b^2 + \mathcal{O}(\delta b^3).
\label{eq:fidelityToQFI}
\end{equation}
Let us introduce $\Delta \vec{\tau}\equiv \mathcal{S} \Delta \vec t$, the vector with signed components $ \Delta \tau_{\alpha}=\Delta t_k$ for a single segment and $\Delta \tau_{\alpha}=\Delta  t_{\rm eff}^{(\mathcal{B})}< \Delta t_{\mathcal{B}}$ for a maximal DP block. We can then write the states $\rho_b(t)$ and $\rho_{b+\delta b}(t)$ in the multilabel unitary mixture compressed representation, Eq. \,(\ref{eq:compressed_multilabelRU}), and perform the following change of variables in $\rho_{b+\delta b}(t)$: 
\begin{equation}
\qquad \quad \Lambda_{\alpha} \mapsto \Lambda_{\alpha} + \delta b\,\Delta \tau_{\alpha}, \qquad \alpha=1,\ldots, \widetilde Q.
\label{eq:lambdashift}
\end{equation}
This aligns the ``unitary branches'' between the two states:  both become mixtures over the \emph{same} family $\{U_{\vec\Lambda}(t)\rho_0 U_{\vec\Lambda}^{\dagger}(t)\}$, at the cost of the Gaussian weight in  $\rho_{b+\delta b}(t)$ acquiring a mean shift
$\dd \widetilde \Phi(\delta b) \equiv \delta b\,(\Delta \tau_1,\ldots,\Delta \tau_{\widetilde Q})^{\top}$. Compactly, we may write 
\begin{equation}
\rho_b(t) = \int d\vec\Lambda\;\mathcal{N}_{\vec\Lambda}(\vec 0,\widetilde \Sigma)\;\rho_{\vec\Lambda},\quad 
\rho_{b+\delta b}(t) = \int d\vec\Lambda\;\mathcal{N}_{\vec\Lambda}(\dd \widetilde\Phi(\delta b),\widetilde \Sigma)\;\rho_{\vec\Lambda},
\label{eq:alignedMixtures}
\end{equation}
where $\rho_{\vec{\Lambda}}= U_{\vec\Lambda}(t)\,\rho_0\,U^{\dagger}_{\vec\Lambda}(t)$ is an element of the mixture.
We may use the strong concavity property of the 
fidelity, 
\begin{equation}
\mathcal{F}\bigg(\sum_k p_k \rho_k, \sum_j q_j \sigma_j\bigg) \geq \sum_k \sqrt{p_k q_k} \,\mathcal{F}(\rho_k, \sigma_k),    
\end{equation}
 to upper bound $\mathcal{F}\!\left(\rho_{b},\rho_{b+\delta b}\right)$. For two Gaussians with equal covariance $\widetilde \Sigma$, the overlap is
\begin{equation}
\!\!\!\!\!\!\!\!\!\!\!\!\!\!\!\!\!\!\!\!\!\!\!\!\!\!\!\!\!\!\!\!\!\mathcal{F}\!\left(\rho_{b},\rho_{b+\delta b}\right)\geq \int d\vec\Lambda\;
\sqrt{\mathcal{N}_{\vec\Lambda}(\vec 0,\widetilde \Sigma)\,\mathcal{N}_{\vec\Lambda}(\dd\Phi,\widetilde \Sigma)} \,\underbrace{\mathcal{F}(\rho_{\vec{\Lambda}},\rho_{\vec{\Lambda}})}_{=1}
=
\exp\!\left(-\frac{1}{8}\,\dd \Phi^{\top} \widetilde\Sigma^{-1}\dd\Phi\right).
\label{eq:gaussianOverlapClosedForm}
\end{equation}
Expanding at small $\delta b$ gives the desired lower bound in terms of a quadratic form:
\begin{equation}
\mathcal{F}\bigl(\rho_b(t),\rho_{b+\delta b}(t)\bigr)
\ge 1-\frac{1}{8}\,(\Delta\vec \tau^{\,\top}\widetilde \Sigma^{-1}\Delta\vec \tau)\,\delta b^2 + O(\delta b^3).
\label{eq:fidelityExpansion}
\end{equation}

We now leverage a result whose proof we defer to Appendix \ref{lemmaproof}:

\begin{lemma} [Monotonicity of the Gaussian quadratic form under linear compression]
\label{lem:compression_monotonicity}

Let $\vec\lambda \in \mathbb{R}^{Q'}$ be a zero-mean Gaussian random vector with covariance
$\Sigma \succ0\in \mathbb{R}^{Q'\times Q'}$. If $\mathcal{S}\in\mathbb{R}^{\widetilde Q\times Q'}$ is a full-row-rank matrix that defines a linear compression $\vec\Lambda = \mathcal{S}\,\vec\lambda$, with covariance $\widetilde \Sigma = \mathcal{S} \,\Sigma \,\mathcal{S}^\top$, the quadratic form appearing in the Gaussian overlap satisfies
\begin{equation}
(\mathcal{S}\,\Delta \vec t)^\top \widetilde \Sigma^{-1} (\mathcal{S}\,\Delta \vec t)= \,\Delta \vec \tau^\top \widetilde \Sigma^{-1} \,\Delta \vec \tau \;\le\;
\Delta \vec t^\top \Sigma^{-1} \Delta \vec t.
\label{eq:compression_monotonicity}
\end{equation}
\end{lemma}

\noindent 
It follows that any sequence containing DP blocks is upper-bounded by a sequence with the same segmentation that does not contain such blocks. We can then assume, without loss of generality, that {\em no DP block} is present in what follows. 

Replacing Eq.\,(\ref{eq:fidelityExpansion}) in Eq.\,(\ref{eq:fidelityToQFI}) then leads to an upper bound on the single-shot QFI, which we further bound using Lemma \ref{lem:quadform}: 
\begin{equation}
\!\!\!\!\!\!\!\!\!\!\!\!\!\!\!\!\!\!\!\!\!\!F_Q[\rho_b(t)] \le \Delta\vec \tau^{\top} \widetilde\Sigma^{-1}\Delta\vec \tau \leq \Delta \vec{t}^{\top} \Sigma^{-1} \Delta \vec{t} \quad \Rightarrow \quad F_Q^{\rm tot}[\rho_b(t)]\le
\frac{T}{t}\,\Delta\vec t^{\top }\Sigma^{-1}\Delta  \vec t,
\label{eq:minUpperBounds}
\end{equation}
where the bound for the total QFI is obtained by considering $\nu=T/t$ repetitions. 
\end{proof}

A relevant particular case arises when the control is DP throughout the whole encoding interval -- for instance, when a train of instantaneous $\pi_x$ or $\pi_y$ pulses is applied, as mentioned earlier. In the limit of arbitrarily fast control, the resulting DD schemes are capable in principle of suppressing decoherence to arbitrary accuracy \cite{violalloyd,ViolaDD} (though, importantly, a finite accuracy arises if a finite minimum-separation constraint between pulses is imposed \cite{ViolaLimits}). A key observation is that any DP sequence that improves the short-time (quadratic) behavior of the decay coefficient  $\chi_{\rm DD}(t)= \langle \lambda_{\rm eff}^{(\mathcal{B})} \lambda_{\rm eff}^{(\mathcal{B})}\rangle$ 
must necessarily {\em cancel the signal as well}. This is easy to verify by writing  $\chi_{\rm DD}(t)$ in the frequency domain:
\begin{equation*}
    \chi_{\rm DD}(t) =\int_{- \infty}^{\infty} \dd \omega\, \frac{S(\omega)}{2 \pi}\, 
\tilde{F}(t, \omega), \quad 
\tilde{F}(t,\omega)= \left| \int_{0}^t \!\dd s\, y(s)\, e^{-i \omega s} \right|^2.
\end{equation*}
Expanding $\tilde{F}(t,\omega)$ at short times and imposing a vanishing filtering-order at $\omega=0$ 
requires $\int_0^t \!\dd s\, y(s)=0$ \cite{Paz2014}, which in turn leads to $b \Delta t_{\rm eff}^{\mathcal{B}}=0$, rendering the procedure useless for estimation. On the other hand, DP sequences with a nonzero filtering order  at zero partially retaining the signal but are unable to alter the short-time behavior of $\chi_{\rm DD}(t)$. The resulting output state $\rho_b(t)$ can be written as a convex unitary mixture depending on a {\em single} effective label $\lambda_{\rm eff}^{(\mathcal{B})}$, given by Eq.\,(\ref{eq:effective_label_block}), and the upper bound in Eq.\,(\ref{eq:minUpperBounds}) again directly applies\footnote{We note that, in a DP control scenario, it is also possible to derive precision bounds precluding improvements over free evolution by using the purification approach pursued in Sec.\,\ref{general} with appropriate modification, see \cite{thesis}.}.

\subsubsection{No-go theorems.}
Theorem \ref{thmquadf} may be used to establish the no-go results we anticipated. 

\begin{corollary}
\label{thm:white_nogo}
For frequency estimation subject to stationary, collective, white-noise dephasing, arbitrary collective open-loop control cannot improve the estimation precision beyond the $N$-independent bound derived in the absence of control, Eq.\,(\ref{eq:QFI_summary_bounds}).
\end{corollary}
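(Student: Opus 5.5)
The plan is to apply Theorem~\ref{thmquadf} directly, since for white noise the uncompressed covariance matrix $\Sigma$ has an especially simple form. For stationary Gaussian white noise, $C(t_1,t_2)=\chi_0\omega_c\,\delta(t_1-t_2)$. The segment labels $\lambda_j=\int_{t_j}^{t_{j+1}}\dd s\,\xi(s)$ are integrals over disjoint intervals, so they are uncorrelated, and the covariance is diagonal:
\begin{equation*}
\Sigma=\chi_0\omega_c\,\mathrm{diag}(\Delta t_0,\Delta t_1,\ldots,\Delta t_{Q}).
\end{equation*}
This holds for any pulse timing, including arbitrarily fine segmentations. Every $\Delta t_j>0$, so $\Sigma\succ 0$, as Theorem~\ref{thmquadf} requires.

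Next I substitute into the quadratic form:
\begin{equation*}
\Delta\vec t^{\,\top}\Sigma^{-1}\Delta\vec t=\sum_{j=0}^{Q}\frac{\Delta t_j^2}{\chi_0\omega_c\,\Delta t_j}=\frac{1}{\chi_0\omega_c}\sum_{j}\Delta t_j=\frac{t}{\chi_0\omega_c}.
\end{equation*}
The value is independent of the number of pulses, their timing, the pulse operators $\mathcal O_k$, and the input state. The single-shot QFI therefore obeys $F_Q[\rho_b(t)]\le t/(\chi_0\omega_c)$. With $\nu=T/t$ repetitions, Eq.\,(\ref{eq:minUpperBounds}) gives
\begin{equation*}
F_{Q\,\rm tot}\le \frac{T}{\chi_0\omega_c}
\end{equation*}
for every $t$, so optimizing over the interrogation time changes nothing. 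This is exactly the uncontrolled Markovian bound of Eq.\,(\ref{eq:QFI_tot_opt_markov}). The QCRB then yields $\Delta\hat b^2_{\rm tot}\ge \chi_0\omega_c/T$, which is the $n=1$ entry of Eq.\,(\ref{eq:QFI_summary_bounds}), independent of $N$.

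Two remarks close the argument. First, Lemma~\ref{lem:compression_monotonicity} already covers sequences containing DP blocks, so no separate case analysis is needed. Second, continuous driving falls under the pulsed model as the $\Delta t\to0$ limit (Appendix~\ref{app:continuous_limit}). Because the bound is uniform in the segmentation, it survives that limit.

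The main obstacle is justifying the diagonal form of $\Sigma$ rigorously, given that white noise is only a generalized process. I would treat $\lambda(t)$ as a Wiener process with diffusion constant $\chi_0\omega_c$; its increments over disjoint intervals are independent Gaussians with variance $\chi_0\omega_c\Delta t_j$. Since the resulting bound holds uniformly, the limit of infinitely many pulses is unproblematic.
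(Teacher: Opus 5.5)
Your proposal is correct and takes essentially the same route as the paper. For white noise the segment labels are independent with variances $\chi_0\omega_c\Delta t_j$, so $\Sigma$ is diagonal and the quadratic form of Theorem~\ref{thmquadf} collapses to $t/(\chi_0\omega_c)$; you rightly note this holds with equality, where the paper writes $\le$. Multiplying by $T/t$ then gives the $t$-independent bound $T/(\chi_0\omega_c)$, i.e.\ $\Delta\hat b^2\ge\chi_0\omega_c/T$, and your remarks on DP blocks, the continuous-driving limit and the Wiener-process reading of the labels only make explicit what the paper leaves implicit.
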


\begin{proof}

For white noise, the segments are uncorrelated and the corresponding variances add linearly in time, namely, $\chi_{ij}= \chi_0\omega_c\,\Delta t_i\, \delta_{ij} $. Therefore, 
\begin{equation}
\Delta\vec t^{\,\top}\Sigma^{-1}\Delta\vec t
\le
\frac{1}{\chi_0\omega_c}\sum_{i=1}^{Q'} \Delta t_i= \frac{t}{\chi_0\omega_c}.
\label{eq:compressed_white_quadratic_bound}
\end{equation}
Substituting into Eq.~(\ref{eq:minUpperBounds}) yields the state-independent bound
\begin{equation}
F_Q^{\mathrm{tot}}[\rho_b(t)]
\le
\frac{T}{\chi_0\omega_c}
\quad \Rightarrow \quad
\Delta \hat b_{\mathrm{QCRB}}^2
\ge
\frac{\chi_0\omega_c}{T},
\label{eq:whiteFinalMin}
\end{equation}
which coincides with the bound derived in the absence of control in Eq.\,(\ref{eq:QFI_summary_bounds}).
\end{proof}

\begin{corollary}
\label{thm:finite_nogo}
For frequency estimation subject to stationary, collective, temporally correlated dephasing, arbitrary collective open-loop control cannot improve the estimation precision beyond the SQL asymptotic scaling obtained without control.
\end{corollary}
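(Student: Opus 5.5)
The plan is to follow step (v) of the outline and combine Theorem~\ref{thmquadf} with the noiseless Heisenberg ceiling. For any pulse sequence on $[0,t]$, the single-shot QFI is bounded in two ways. The first is the noise-limited bound $F_Q\le \Delta\vec t^{\,\top}\Sigma^{-1}\Delta\vec t$. The second is the control-independent noiseless bound $F_Q\le N^2t^2$ of \cite{boixo2007}. The latter holds because the noise-averaged state is a convex mixture of unitary branches $U_{\vec\Lambda}\rho_0U_{\vec\Lambda}^\dagger$, and for each fixed branch the parameter $b$ enters through a generator of operator seminorm at most $N$ integrated over time $t$. Convexity of the QFI then gives $F_Q\le N^2t^2$.

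The first bound must be controlled in terms of $\chi(t)$ for stationary colored noise. I would use the variational characterization of the quadratic form,
\[
\Delta\vec t^{\,\top}\Sigma^{-1}\Delta\vec t=\sup_{\vec w}\frac{(\vec w^{\top}\Delta\vec t)^2}{\vec w^{\top}\Sigma\vec w}.
\]
Choosing $\vec w=(1,\dots,1)$ gives only a lower bound, so the supremum has to be bounded over all $\vec w$ uniformly in the segmentation. Interpret $\vec w^{\top}\Sigma\vec w=\int\frac{d\omega}{2\pi}S(\omega)\,|\int_0^t w(s)e^{i\omega s}ds|^2$ for a piecewise-constant $w$, and $\vec w^\top\Delta\vec t=\int_0^t w(s)\,ds$. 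Since $C$ is continuous with $C(0)<\infty$, $S$ is integrable and has finite weight near $\omega=0$.

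The main obstacle is that without a minimum pulse spacing the supremum can be made large: highly oscillatory $w$ probes high frequencies where $S$ is small. I would therefore not aim for an $N$-independent constant. Instead I would show that the bound scales as $t/\omega_c$ times a constant, up to the short-time regime. The first thing to try is a lower bound $S(\omega)\ge s_0>0$ on a band $|\omega|\le\omega_c$. Writing $\hat w(\omega)=\int_0^t w(s)e^{i\omega s}ds$, one has $|\hat w(0)|=|\int w|$, and for $t\lesssim\omega_c^{-1}$ the function $\hat w$ varies little over a band of width of order $1/t$. This gives $\vec w^\top\Sigma\vec w\gtrsim s_0\,|\int w|^2/t$, hence $\Delta\vec t^{\,\top}\Sigma^{-1}\Delta\vec t\lesssim t/s_0\sim t^2/\chi(t)$, which is the uncontrolled behavior $t^2/(\chi_0^2\omega_c^2t^2)$. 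If the band argument is too weak, I would fall back on the weaker statement that the total-QFI scaling exponent is unchanged, allowing a constant-factor gain, which matches the abstract's caveat.

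Finally, combine the two bounds: $F_Q^{\rm tot}\le\frac{T}{t}\min\{N^2t^2,\;c\,t^2/(\chi_0^2\omega_c^2t^2)\cdot\omega_c^{0}\}$, that is, $F_Q^{\rm tot}\le T\min\{N^2t,\;c'/(\chi_0^2\omega_c^2 t)\}$. The noise branch decreases in $t$ and the noiseless branch increases, so the best $t$ is at the crossover $t^\star\sim N^{-1}/(\chi_0\omega_c)$. This yields $F_Q^{\rm tot}\lesssim T N/(\chi_0\omega_c)$, i.e.\ $\Delta\hat b\gtrsim\sqrt{\chi_0\omega_c/T}\,N^{-1/2}$: SQL scaling, possibly with a better prefactor than Eq.\,(\ref{eq:QFI_summary_bounds}) with $n=2$.
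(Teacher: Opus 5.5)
Your architecture matches the paper's. You combine Theorem~\ref{thmquadf} with the noiseless ceiling $F_Q\le N^2t^2$; your justification of that ceiling, via convexity of the QFI over the $b$-independent Gaussian weights with each branch having a generator of seminorm at most $Nt$, is correct. You then take $\max_t\min\{\cdot,\cdot\}$ with the crossover at $t^\star\propto N^{-1}$. The gap is in the one step with real content: bounding $\Delta\vec t^{\,\top}\Sigma^{-1}\Delta\vec t$ for controlled segmentations. Your band argument does not supply this bound.

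\emph{The computation is already wrong.} For $\omega_c t\ll1$ the band $|\omega|\le\omega_c$ is much narrower than $1/t$. Even a flat $\hat w$ would then only give $\vec w^{\top}\Sigma\vec w\gtrsim s_0\omega_c(\int_0^t w)^2/\pi$, not $s_0(\int_0^t w)^2/t$.
\begin{itemize}
\item The resulting bound $t/s_0$ fails even without control ($Q'=1$), where the form equals $t^2/\chi(t)\to 1/C(0)$.
\item Inserted into $\frac{T}{t}(\cdot)$, it would give an $N$-independent total QFI, contradicting the GHZ and OATS protocols of Sec.~\ref{saturating}.
\item The identification $t/s_0\sim t^2/\chi(t)$ holds only for linear $\chi$, i.e.\ white noise. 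Your final formula silently switches to a $t$-independent $c/(\chi_0^2\omega_c^2)$ instead.
\end{itemize}

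\emph{The flatness premise also fails.} Flatness of $\hat w$ near $\omega=0$ rests on $|\hat w(\omega)-\hat w(0)|\le|\omega|\,t\int_0^t|w|$. This is small only when $\int_0^t|w|\ll|\int_0^t w|/(\omega_c t)$. That condition fails precisely for the sign-alternating $w$ you flagged, and these are the maximizers. To improve on free evolution by an $O(1)$ factor as $\omega_c t\to0$, one needs for instance $\int_0^t s^2 w\sim\omega_c^{-2}\int_0^t w$. Since $|\int_0^t s^2w|\le t^2\int_0^t|w|$, this forces $\int_0^t|w|\gtrsim(\omega_c t)^{-2}|\int_0^t w|$.

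In fact no bound uniform in the pulse sequence can hold. Fix $Q'$ and let $\omega_c t\to0$. To leading order $\hat w(\omega)$ is a polynomial of degree $<Q'$, whose coefficients (the moments $\int_0^t s^k w$) can be tuned freely by the $Q'$ amplitudes. The supremum therefore tends to $\sup_{\deg p<Q'}|p(0)|^2/\int\frac{d\omega}{2\pi}S(\omega)|p(\omega)|^2$. This is a reciprocal Christoffel function of the spectral measure at the origin: it is nondecreasing in $Q'$ and diverges for, e.g., the Gaussian spectrum, consistent with the paper's $K_{Q'}\propto Q^{(s+1)/2}$.

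So what you call the fallback is the actual content of the corollary, and it needs its own proof at fixed $Q'$. That proof is Lemma~\ref{lem:quadform}. The paper obtains it as follows:
\begin{itemize}
\item Taylor-expand $\chi_{ij}$ in the (assumed finite) spectral moments.
\item Factor $\Sigma=B\mathcal{A}B^{\top}$ over the finite-difference vectors $|\Delta a^{(\ell)}\rangle$.
\item Power-count the determinants in $[\mathcal{J}_R^{(Q')}]^{-1}_{11}$ via Cauchy--Binet and Leibniz, so that the powers of $\omega_c t$ cancel and leave $K_{Q'}\omega_c^{-2}$.
\end{itemize}
With that lemma your final step goes through with $c'/(\chi_0^2\omega_c^2)$ replaced by $K_{Q'}\omega_c^{-2}$. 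This gives $F_Q^{\rm tot}\le (T/\omega_c)K_{Q'}^{1/2}N$: SQL scaling for any fixed number of pulses, with a prefactor that is $N$-independent but control-dependent.
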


\begin{proof}
The overlap-based argument used in Theorem \ref{thm:white_nogo} yields the general bound in Eq.\,(\ref{eq:minUpperBounds}). We tighten it by combining it with the noiseless HS limit, that is, 
\begin{equation}
F_Q^{\mathrm{tot}}[\rho_b(t)]
\le
\min\!\left(
\frac{T}{t}\,
\Delta\vec t^{\,\top}\Sigma^{-1}\Delta\vec t,\;
T t N^2
\right).
\label{eq:compressed_min_colored}
\end{equation}
To compute the upper bound in Eq.\,(\ref{eq:compressed_min_colored}), we must determine the behavior of the quadratic form $\Delta\vec t^{\,\top}\,  \Sigma^{-1}\Delta \vec t$ in the short-time limit for arbitrary pulse placements. Again, we use a technical result we prove in Appendix \ref{lema1proof}:

\begin{lemma}[Quadratic form bound in the short-time regime]
\label{lem:quadform}
For stationary colored noise, there exists a dimensionless constant $K_{Q'}$, depending only on the number of control segments $Q'$, such that, for $\omega_c t\ll 1$,
\begin{equation}
\Delta\vec t^{\,\top}\widetilde \Sigma^{-1}\Delta\vec t
\;\le\;
K_{Q'}\,\omega_c^{-2}.
\label{eq:quadformBoundShortTime}
\end{equation}
\end{lemma}

\noindent
At short times, $\Sigma^{-1}$ develops a structured, nearly singular form determined by the noise correlation function. The 
idea is to perturbatively expand the inverse covariance in terms of the smallness parameter $\omega_c t$ to approximate the quadratic form at short times.  Substituting Eq.\,(\ref{eq:quadformBoundShortTime}) into Eq.\,(\ref{eq:compressed_min_colored}) yields
\begin{equation}
F_Q^{\mathrm{tot}}(T,t)
\le
\max_{t>0}
\min\!\left(
T t N^2,\;
\frac{T}{t}\,K_{Q'}\,\omega_c^{-2}
\right).
\label{eq:compressed_finiteMinBound}
\end{equation}
The maximizing interrogation time is $t^\star
= K_{Q'}^{1/2}\,\omega_c^{-1}N^{-1},$
which lies in the short-time regime for $N\gg1$. Evaluating the QFI at $t=t^\star$ gives
\begin{equation}
F_Q^{\mathrm{tot}}[\rho_b(t)]
\le
\frac{T}{\omega_c}\,K_{Q'}^{1/2}N
\quad
 \Rightarrow \quad \Delta \hat b_{\mathrm{QCRB}}^2
\ge
\frac{\omega_c}{T}\, K_{Q'}^{-1/2}N^{-1}.
\label{eq:compressed_QFIperf}
\end{equation}
Thus, no scaling gain over the SQL is possible, as claimed. 
\end{proof}

Physically, the short-time behavior of the quadratic form is controlled by the segmentation geometry through the covariance matrix $\Sigma$, and bounded by Eq.\,(\ref{eq:compressed_finiteMinBound}). Open-loop control therefore cannot improve the asymptotic scaling with $N$. The constant entering Eq.\,(\ref{eq:compressed_QFIperf}) encodes a geometric factor arising from the
inversion of the reduced covariance matrix. Its value depends on the spectrum details and typically
grows with the number of pulses, rendering the bound progressively larger
as more pulses are
applied. For a Gaussian-decaying spectrum,
\begin{equation}
S(\omega)=\alpha\,\omega^s\omega_c^{1-s}e^{-(\omega/\omega_c)^2},
\label{Gausspec}
\end{equation}
we are able to compute this prefactor exactly in Appendix~\ref{app:KQ_gaussian_corrected}, finding the result $K_{Q'}\propto Q^{(s+1)/2}$, as illustrated in Fig.~\ref{KQplot}. While no scaling advantage is permitted, this leaves open the possibility for a potentially meaningful improvement in the achievable precision for given $N$ 
when $Q \gg 1$. However, we leave the problem of finding 
a concrete protocol that may realize this advantage to future investigation.

\begin{figure*}[t!]
\centering
\includegraphics[width=16cm]{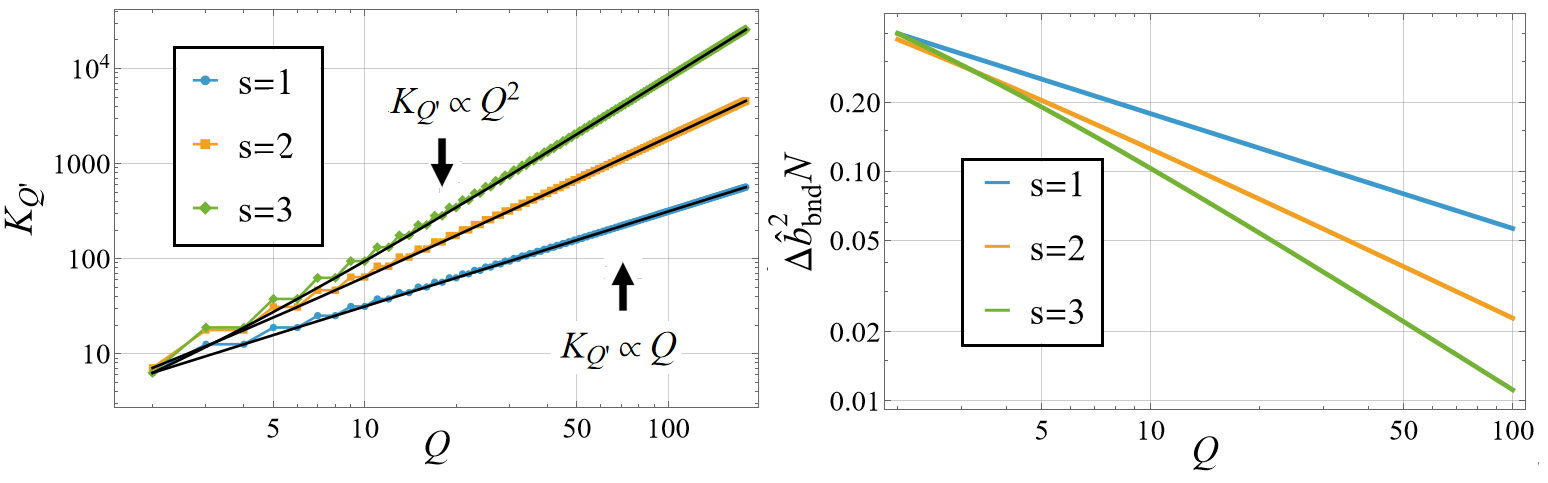}
\vspace*{-3mm}
\caption{{\bf Constant prefactor in the QFI upper-bound as a function of pulse number.}
Left: Behavior of the constant $K_{Q'}$ entering $F_Q^{\rm tot}[\rho_b(t)]$ in Eq.\,(\ref{eq:compressed_QFIperf}) as a function of pulse number $Q$ for a colored spectrum with a Gaussian profile, Eq.\,(\ref{Gausspec}). The prefactor grows as $Q^{(s+1)/2}$.  The colored curves join the exact $K_{Q'}$ values, represented by dots, whereas the black solid lines describe the coarse asymptotic behavior.
Right: Estimation lower bound as a function of pulse number.
Sub-SQL performance is still precluded, but the prefactor $\Delta \hat b_{\rm bnd}^2 N \sim K_{Q'}^{-1}$ in Eq.~(\ref{eq:compressed_QFIperf}) may improve significantly. Parameters: $\alpha=1$, $\omega_c=1$, $T=1$.}
\label{KQplot}
\end{figure*}

\section{Conclusion}
\label{conclusions} 

In this work, we analyzed entanglement-assisted frequency estimation in the presence of stochastic phase fluctuations. These were modeled as zero-mean, Gaussian noise acting {\em collectively} on the probes, and characterized by arbitrary temporal correlations. We derived state-independent lower bounds on the achievable precision when the experimental runtime $T$ is the key resource constraint. The asymptotic scaling of these bounds is entirely determined by the short-time behavior of the decoherence function,
$\chi(t)\simeq \chi_0^n(\omega_c t)^n$, with $n\in\mathbb{N}$.

For linear and quadratic short-time decay, $n\in\{1,2\}$, corresponding respectively to stationary Markovian and colored dephasing noise, the bounds impose a no-go to superclassical metrological performance. By contrast, a quantum advantage remains possible in principle only for colored non-stationary noise with $n\ge 3$. In all cases, the optimal noiseless sequence using spin-squeezed states \cite{Monika}  saturates the corresponding noisy precision bounds.

We further assessed the role of open-loop control based on arbitrary sequences of collective control pulses, including both DD sequences and continuous open-loop control as special instances. 
The resulting dynamics may be exactly represented as a multilabel mixture of unitary channels. For stationary Markovian and colored noise, we proved that allowing for any number of pulses and arbitrary pulse directions cannot improve the asymptotic precision scaling. Instead, performance is ultimately limited by the same bounds that apply in the absence of control, up to a constant pre-factor in the case of temporally correlated noise. Thus, open-loop pulsed control is ineffective as a metrological resource for overcoming the SQL in these regimes.

Several extensions of the present analysis merit further investigation. We expect a similar absence of scaling improvements for non-stationary, temporally correlated noise ($n \ge 3$) under the application of control. Extending the proof, however, requires additional control of high-order spectral moments, which may be unbounded in this regime. It is also natural to conjecture that SQL-type limitations persist for collective control when the system interacts with a bona fide nonclassical bath \cite{FelixPRA,FUR}. In this case, the decay coefficient shares the same functional form as those generated by classical Gaussian fluctuations. Additionally, however, the non-commuting bath operators give rise to noise-induced \emph{phase} contributions. While we expect that these additional  contributions may only 
further degrade the precision and hinder a metrological advantage, a formal proof is lacking as yet. 
Finally, as mentioned, whether a suitably designed control protocol 
is capable of saturating the prefactor advantage allowed by $K_{Q'}$ is an open question worthy of investigation. As the bound to precision improves with the number of pulses, this could result in a meaningful advantage in practical scenarios. 

Altogether, our results identify collective dephasing as a noise model in which optimal interferometric strategies are remarkably rigid. The best noiseless protocols remain optimal, and neither changing the probe initializations nor applying symmetry-preserving, collective open-loop control can fundamentally alter the attainable precision scaling.

\section*{Acknowledgments}

It is a pleasure to thank Leigh Norris, Maryam Mudassar and Natthaphong Wonglakhon for insightful discussions and early contributions related to various aspects of this work. The authors are also indebted to Milad Marvian for valuable input. Work at Dartmouth was supported by the US NSF through Grant No.\,PHY-2013974. Work at Griffith was supported  by the Australian Research Council Discovery Projects Grant No.\,DP210102291. F.R. also gratefully acknowledges support from the DOE EXPRESS award No.\,DE-SC0024685.

\section*{Appendix}

\renewcommand{\theequation}{A.\arabic{equation}}
\setcounter{equation}{0}
\renewcommand{\thesection}{\Alph{section}}
\setcounter{section}{0}

\section{Additional discussion on noise correlation and spectral properties}
\label{app:formal_spectrum}

\subsection{Sufficient condition for short-time quadratic scaling of $\chi(t)$}

Assume that the stochastic process $\xi(t)$ is wide-sense stationary. Then the decay function $\chi(t)$ is quadratic at short time, 
$\chi(t)=C(0)\,t^2+o(t^2),$ $t\rightarrow 0$, if (i) the variance is finite, $C(0)=\langle \xi(s)\xi(s)\rangle < \infty$, and (ii) $C(\tau)$ is continuous at $\tau=0$. 

\paragraph{Proof.}
Assume that assumptions (i) and (ii) hold. For small $\tau$, we can then write
$$ 
C(\tau)=C(0)+\epsilon(\tau),
\qquad 
\epsilon(\tau)\rightarrow[\tau\to0]{}0 .$$
Substituting into Eq.\,(\ref{chit}) in the main text and integrating with respect to $\tau= s-s'$ gives
\begin{equation*}
\chi(t)
=2\int_0^t\!\dd\tau  (t-\tau)\big[C(0)+\epsilon(\tau)\big].
\end{equation*}
The leading contribution is
$$
2C(0)\int_0^t\!\dd\tau \,(t-\tau)
=2C(0)\left[\frac{t^2}{2}\right]
=C(0)t^2 .
$$
The correction satisfies
$$
\left|2\int_0^t\!\dd\tau\, (t-\tau)\epsilon(\tau)\right|
\le 
2t\int_0^t \!\dd\tau\, |\epsilon(\tau)|,
$$
which is $o(t^2)$ because $\epsilon(\tau)\to0$.  Hence the leading behavior is quadratic. 
It follows that finite variance together with continuity of $C(\tau)$ at the origin are sufficient conditions for quadratic short--time scaling.
\hfill \qed

\subsection{On the decay coefficient of an integrated stationary-noise process}

Suppose that the fluctuations of interest arise from the integrated noise of a stationary zero-mean process $\eta(t)$ with spectrum $S_\eta(\omega)$
$$
\xi(t)=\int_0^t\!\!\dd u\, \eta(u) , \qquad S_\eta(\omega)= \int_{-\infty}^{\infty}\!\!\dd \omega \, \langle \eta (\tau) \eta (0)\rangle \,e^{-i \omega \tau}. $$
The decay coefficient is the variance of random phase accumulation $\lambda$:
\begin{equation}
\lambda=\int_{0}^t\!\!\dd s\, \xi(s) \quad \Rightarrow \quad \chi(t)= \langle\lambda^2\rangle= \int_0^t\!\!\dd s\int_0^t\!\!\dd s'\left[\int_0^t\!\!\dd u\int_0^t\!\!\dd u' \langle \eta(u)\eta(u')\rangle\right].
 \label{lX}
\end{equation}
As $\eta(t)$ is stationary by assumption, it has a well-defined spectrum, from which the Wiener-Khinchin relationship emerges:
\begin{equation*}
    \langle \eta(u) \eta(u')\rangle= \frac{1}{2\pi} \int_{-\infty}^{\infty}\!\!\dd \omega\, e^{i \omega \tau} S_\eta(\omega).
\end{equation*}
 Replacing in Eq.\,(\ref{lX}) leads to an expression for the decay coefficient in the frequency domain:
\begin{equation*}
    \chi(t)= \frac{1}{2 \pi} \int_{-\infty}^{\infty}\!\!\dd \omega\, S_\eta(\omega) \, \left| \int_{0}^t\!\!\dd s\int_0^s\!\!\dd u \,e^{i \omega u} \right|^2=\frac{1}{2 \pi} \int_{-\infty}^{\infty}\!\!\dd \omega\, S_\eta(\omega) \, 
    G(t,\omega) 
\end{equation*}
We may write $\xi(t)$ as the increments of a convolution process:
\begin{equation*}
    \xi(t)= \xi_{-\infty}(t)-\xi_{-\infty}(0), \qquad \xi_{-\infty}(t)= \int_{-\infty}^t\!\!\dd s \,\eta(s)= \eta(t) \,*\,\theta(t),
\end{equation*}
where $\theta(t)$ is the Heaviside step function. The Fourier transform of $\xi_{-\infty}(t)$ is the product of the transforms of the functions making the convolution: 
\begin{eqnarray}
  \hat{\xi}_{-\infty}(\omega)= \hat{\eta}(\omega)  \,\hat{\theta}(\omega),\qquad \hat{\theta}(\omega)=  {\rm P.V.} \left( \frac{1}{i \omega}\right) + \pi \delta(\omega),  
  \label{dec1}
\end{eqnarray}
with ${\rm P.V.}$ denoting Cauchy's principal value.
Formally, we can assign $\xi_{-\infty}(t)$ a generalized spectrum from the identity:
\begin{eqnarray*}
    S_{\xi_{-\infty}}(\omega)&= \int_{-\infty}^{\infty}\!\!\dd \omega\, e^{-i\omega \tau } \,\left[\int_{-\infty}^{\infty}\!\!\dd s \int_{-\infty}^{\infty}\!\!\dd s'\,\langle \eta(t+\tau-s) \eta (t-s')\rangle \theta(s) \theta(s') \right] \nonumber\\
    &= \left| \int_{-\infty}^{\infty}\!\!\dd s \,e^{i \omega s} \theta(s) \right|^2 \int_{-\infty}^{\infty}\!\!\dd \omega\, e^{-i\omega (\tau-s+s')} \langle \eta(\tau-s+s')\eta (0)\rangle \nonumber\\
    &= |\hat{\theta}(\omega)|^2 S_\eta (\omega).
\end{eqnarray*}
We emphasize that this object does not define a spectrum in the Wiener-Khinchin sense, as $\xi(t)$ is non-stationary; however, it is a useful representation for computing second moments. Writing the phase accumulation $\lambda$ in terms of the Fourier components of $\xi_{-\infty}(t)$,
\begin{eqnarray*}
    \lambda&= \frac{1}{2 \pi} \left[ \int_{-\infty}^{t}\!\!\dd s  \int_{-\infty}^{\infty}\!\!\dd \omega \,\hat{\xi}_{-\infty}(\omega) e^{-i \omega s}-\int_{-\infty}^{0}\!\!\dd s'  \int_{-\infty}^{\infty}\!\!\dd \omega \,\hat{\xi}_{-\infty}(\omega) e^{-i \omega s'} \right]\nonumber\\
    &= \frac{1}{2 \pi} \int_{-\infty}^{\infty}\!\!\dd \omega\,\hat{\xi}_{-\infty}(\omega)  \left[\frac{1-e^{-i \omega t}-i \omega t}{i \omega}\right]\\&=\frac{1}{2 \pi} \int_{-\infty}^{\infty}\!\!\dd \omega\,\hat{\xi}_{-\infty}(\omega)  B(t,\omega),
\end{eqnarray*}
and replacing $\hat{\xi}_{-\infty}(\omega)$ from Eq.\,(\ref{dec1}), we see that the $\delta(\omega)$ contribution vanishes, as $B(t,\omega)$ is zero for $t=0$. We may then express $\chi(t)=\langle \lambda^2 \rangle$ in terms of the ${\rm P.V} (1/i \omega)$ contribution. Hence, we formally identify
\begin{equation*}
    S_{\xi}(\omega)\equiv \left|{\rm P.V.} \left(\frac{1}{i \omega}\right) \right|^2 \! \! S_\eta(\omega)=\frac{S_\eta (\omega)}{\omega^2} .
\end{equation*}

\section{Computation of energy moments 
for precision bound based on variational approach}
\label{varcalc}

We evaluate  the moments $\langle \mathit{H} \rangle $ and $ \langle \mathit{H}^2 \rangle $, where
\begin{equation*}
    H= H_{\rm SB}- h_{\rm B}= t J_z -\zeta\,t\,\frac{(a^\dagger+a)}{2\sqrt{\chi(t)}},
\end{equation*}
and the expectation is taken with respect to system-bath state
\begin{equation*}
    |\vartheta\rangle \otimes |0\rangle_{\rm B} =  \sum_{m=-J}^J c_m |J, m \rangle \otimes |0\rangle_{\rm B}.
\end{equation*}
 For $\langle H \rangle$, expectation $\langle H_{\rm SB}\rangle=t \,\langle J_z \rangle$ is straightforward to compute, whereas $\langle h_{\rm B}  \rangle$ gives:
\begin{eqnarray*}
\langle h_{\rm B} \rangle &= \frac{\zeta\, t }{\,2  \sqrt{ \chi(t)}} \sum_m |c_{m m}|^2\, _{\rm B} \big \langle \sqrt{\, \chi(t)}\, m|   (a^{\dagger} + a) | \sqrt{\, \chi(t)}\, m \big\rangle_{\rm B} \nonumber \\
&= \frac{\zeta\, t}{2 \sqrt{ \,\chi(t)}} \bigg(\sum_m |c_{mm}|^2 \, 2  m\bigg) 2\,\sqrt{\, \chi(t)} =\zeta\, t \, \langle J_z \rangle. \nonumber
\end{eqnarray*}
In the first line we have taken the partial trace with respect to system $\rm S$, which leaves us with the coherent states $| \sqrt{\chi(t)}\, m \rangle_{\rm B} =e^{\sqrt{\,\chi(t)} (a^{\dagger} -  a) m} |0\rangle_{\rm B} $. In the second line, we have used the coherent states property $a |\alpha\rangle = \alpha |\alpha\rangle$. Overall, we have 
$\langle \mathit{H} \rangle= t (1-\zeta) \langle J_z \rangle  .$

The second moment  $\langle \mathit{H}^2 \rangle = \langle H_{\rm SB}^2\rangle - \langle H_{\rm SB}\; h_{\rm B} \rangle-\langle h_{\rm B} \;H_{\rm SB} \rangle + \langle h_{\rm B}^2\rangle $ can be evaluated analogously. We readily obtain $\langle H_{\rm SB}^2\rangle= t^2 \langle J_z^2 \rangle$, whereas $ \langle H_{\rm SB}\; h_{\rm B} \rangle= \langle h_{\rm B} \;H_{\rm SB} \rangle$
 can be found by following similar steps as in the $\langle h_{\rm B} \rangle$ calculation outlined above:
 $ \langle H_{\rm SB}\; h_{\rm B} \rangle=\zeta \,t^2 \langle J_z^2 \rangle $. Finally, we calculate $\langle h_{\rm B}^2\rangle$ as follows:
\begin{eqnarray*}
\langle h_{\rm B}^2 \rangle &= \frac{\zeta^2 t^2}{4\, \chi(t)} \sum_m |c_{mm}|^2 \,_{\rm B}\langle \sqrt{\chi(t)}\, m|   (a^{\dagger} a^{\dagger} +a^{\dagger} a + ( a^{\dagger}a + 1)  + a a | \sqrt{\chi(t)}\, m\rangle_{\rm B} \nonumber \\
&= \frac{\zeta^2 t^2}{4\,\chi(t)} \left[ 4 \,\langle J_z^2 \rangle\,\chi(t) + 1\right] . 
\end{eqnarray*}
In the first line we have used the commutation relationship $[a,a^{\dagger}]=1$ to write $a\, a^{\dagger}$ in terms of $a^{\dagger} a$, and then exploited the fact that coherent states are eigenvectors of $a$. This finally leads to
\begin{eqnarray*}
\langle \mathit{H}^2\rangle = t^2 \left(\langle J_z^2 \rangle  (1- \zeta)^2 + \zeta^2/ (4 \chi(t)) \right).
\end{eqnarray*}

\renewcommand{\theequation}{C.\arabic{equation}}
\setcounter{equation}{0}

\section{Additional technical details on 
controlled dephasing dynamics} 
\label{additional}

\subsection{Continuous control from limiting pulsed dynamics}
\label{app:continuous_limit}

We show here that continuous permutation-invariant driving can be obtained as a limit of the pulsed open-loop model in Eq.\,(\ref{Ht}). We consider an open-loop controlled dynamics governed by the noisy Hamiltonian 
\begin{equation}
H(t)=H_0(t)+\sum_{k=1}^Q u_k(t)\,\mathcal O_k, 
\label{eq:Hcont}
\end{equation}
where $H_0(t)$ is as in the main text and every $\mathcal O_k$ acts on the collective sector.
We partition the interval $[0,t]$ into $Q$ slices of duration $\Delta t=t/Q$, and define the pulse
$$V_j\equiv \exp\!\bigg(\!-i\sum_k u_k(t_j)\,\mathcal O_k\,\Delta t \bigg), $$
applied during slice $[t_j,t_{j+1}]$. The stroboscopic evolution operator is then
$$U_Q(t)=\prod_{j=Q-1}^{0}
\left[
V_j\,\exp\!\bigg\{-i\int_{t_j}^{t_{j+1}}\!\!\dd s\, H_0 (s)\bigg\}
\right].$$
Using the first-order Trotter formula,
$\exp[-i(A+B)\Delta t]
=
\exp(-iA\Delta t)\exp(-iB\Delta t)+O(\Delta t^2),$
we find
$$U_Q(t)=
\mathcal{T}\exp\!\bigg\{
-i\int_0^t \!\!\dd s\,
\Bigl(H_0(s)+\sum_{k=1}^K u_k(s)\, \mathcal O_k\Bigr)
\bigg\}
+\mathcal O(t\,\Delta t). $$
Thus, in the limit $Q\to\infty$ (i.e.,\ $\Delta t\to 0$, with fixed $t$), the pulsed evolution converges to the
continuously driven unitary generated by Eq.\,(\ref{eq:Hcont}).

\subsection{Proof of Lemma \ref{lem:compression_monotonicity}}
\label{lemmaproof}

\begin{proof}
Since $\Sigma \succ 0$, define the vector $\vec{w}\equiv \Sigma^{-1/2} \Delta \vec{t}$, such that $\Delta \vec{t}^{\,\top} \Sigma^{-1} \Delta \vec{t} = \vec{w}^\top \vec{w}.$
Moreover, we can rewrite the quadratic form in compressed notation
\begin{eqnarray*}
(\mathcal{S}\,\Delta \vec{t})^\top \widetilde \Sigma^{-1} (\mathcal{S}\, \Delta \vec{t})
&= \Delta \vec{t}^{\,\top} \mathcal{S}^\top (\mathcal{S}\,\Sigma \,\mathcal{S}^\top)^{-1} \mathcal{S} \Delta \vec{t} = \vec{w}^\top \Bigl[\Sigma^{1/2} \mathcal{S}^\top (\mathcal{S}\,\Sigma \,\mathcal{S}^\top)^{-1} \mathcal{S}\, \Sigma^{1/2}\Bigr] \vec{w}.
\end{eqnarray*}
Defining $P \equiv \Sigma^{1/2} \mathcal{S}^\top (\mathcal{S}\,\Sigma \,\mathcal{S}^\top)^{-1} \mathcal{S}\, \Sigma^{1/2},$
a direct computation shows that $P^\top=P$ and $P^2=P$, hence $P$ is an orthogonal projector.
Therefore, for all $\vec{w}$,
$$
0 \le \vec{w}^\top P \vec{w} \le \vec{w}^\top \vec{w}.$$
Substituting back yields Eq.(\ref{eq:compression_monotonicity}) in the main text.
\end{proof}

\subsection{Proof of Lemma \ref{lem:quadform}}
\label{lema1proof}

\begin{proof} We break the proof in four main steps.

\emph{Step 1: Structural decomposition of the covariance matrix.} 
We introduce the notation $t_i= a_i t $ for the times where rotations are applied, with $0< a_1<\ldots <a_Q<1$. We also define the $Q'$-dimensional finite-difference vectors (in bra/ket notation) $\{|\Delta a^{(\ell)}\rangle\}_{\ell=1}^{\infty}$, with components $|\Delta a^{\ell}\rangle_i= (a_{i+1}^{\ell}-a_i^{\ell})$. In this notation, the quadratic form is
\begin{equation*}
    \Delta \vec{t}^{\,\top} \Sigma^{-1} \Delta \vec{t} \equiv t^2 \bra{\Delta a^{(1)}}\Sigma^{-1} \ket{\Delta a^{(1)}}.
\end{equation*}
By Taylor expanding the entries of $\Sigma$, we find:
\begin{eqnarray*}
    \chi_{ij}(t) &= \frac{1}{4 \pi} \int_{-\infty}^{\infty}\!\dd\omega\,S(\omega) \int_{t_i}^{t_{i+1}}\!\!\!\dd s\int_{t_j}^{t_{j+1}} \!\!\dd s'\cos(\omega (s-s'))\nonumber\\
    &= \sum_{n=0}^{\infty} \frac{1}{2n!}\,(-1)^n  \frac{1}{4 \pi} \int_{-\infty}^{\infty}\!\dd\omega\,S(\omega)\,\omega^{2n}\left[ \int_{t a_i}^{t a_{i+1}}\!\!\!\dd s\int_{t a_j}^{t a_{j+1}} \!\!\dd s' (s-s')^{2n}\right] \nonumber\\
    &\equiv\sum_{n=0}^{\infty}\,(-1)^n\,\chi_0^n\, (\omega_c t)^{2 (n+1)} \Sigma_{ij}^{(n)}.
\end{eqnarray*}
The time dependence  enters as a power of the smallness parameter $\omega_c t$, whereas dimensionless constant $\chi_0^n$ captures the $n$-th spectral moment, and the geometric factor $\Sigma_{ij}^{(n)}$ depends exclusively on the arrangement of the pulses:
\begin{equation*}
 \omega_c^{2(n+1)}\chi_0^{n} \equiv \frac{1}{4 \pi} \int_{-\infty}^{\infty}\dd \omega\,S(\omega)\,\omega^{2n}, \quad   \Sigma_{i j}^{(n)} \equiv \frac{1}{2n!} \int_{ a_i}^{a_{i+1}}\!\!\!\dd u\int_{a_j}^{ a_{j+1}} \!\!\dd u' (u-u')^{2n} \nonumber.
\end{equation*}
The latter quantity can be  evaluated in terms of the finite difference vectors:
\begin{equation}
\begin{array}{rcl}
\Sigma_{ij}^{(n)}
&=&
\frac{1}{2n!}
\sum_{\ell = 0}^{2n}
\binom{2n}{\ell}
(-1)^{\ell}
\int_{a_i}^{a_{i+1}}\!\!\dd u
\int_{a_j}^{a_{j+1}}\!\!\dd u'\;
u^{\ell} {u'}^{2n-\ell}
\\[2ex]
&=&
\displaystyle
\sum_{\ell = 0}^{2n}
(-1)^{\ell}\;
\frac{|\Delta a^{(\ell+1)}\rangle_i}{(\ell+1)!}
\frac{|\Delta a^{(2n-\ell+1)}\rangle_i}{(2n-\ell+1)!}.
\end{array}
\end{equation}
Thus, the $n$-th geometric factor matrix $\Sigma^{(n)} \in M_{Q' \times Q'}$, with entries $\Sigma^{(n)}_{ij}$, is a sum of outer product contributions:
\begin{equation*}
    \Sigma^{(n)}= \sum_{\ell=1}^{2n+1} (-1)^{\ell-1}\frac{1}{\ell!}\,\frac{1}{(2(n+1)-\ell)!} |\Delta a^{(\ell)} \rangle \langle \Delta a^{(2(n+1)-\ell)} |.
\end{equation*}
We write, compactly, $\Sigma^{(n)} =B_{n} M^{(n)} B_{n}^{\top} $ with $B_n \equiv (|\Delta a^{(1)}\rangle, \ldots ,|\Delta a^{(2n+1)}\rangle)$ a $Q' \times (2n+1)$ matrix, and $M^{(n)}$ a $(2n+1) \times (2n+1)$ anti-diagonal matrix with entries
\begin{equation}
M^{(n)}_{i j}=\frac{1}{i!} \frac{1}{j!} (-1)^{(i-j)/2}\,\delta_{i+j,2(n+1)}, \quad i,j \in \{1,2n+1\} .
\label{antidiag}
\end{equation}
The lower-order contributions $\Sigma^{(m)}$ with $m <n$ can be embedded in the space spanned by a bigger set of difference vectors: $\Sigma^{(m)}= B_{n} M^{(m)} B_n^{\top}$. The matrix $M^{(m)}$ now must be interpreted as a square $(2n+1)$-dimensional  with a smaller populated antidiagonal:
\begin{equation*}
M^{(m)}_{ij}=\frac{1}{i!} \frac{1}{j!} (-1)^{(i-j)/2}\,\delta_{i+j,2(m+1)}, \quad \quad i,j \in \{1,2n+1\}   .
\end{equation*}
The covariance can be written exactly as the product of three infinite-dimensional matrices:
\begin{equation}
    \Sigma= \sum_{n=0}^{\infty} \chi_{0}^n\,(\omega_c t)^{2(n+1)}\, \Sigma^{(n)}  = B_{\infty} \mathcal{A}\, B_{\infty}^{\top} ,\quad  B_{\infty} \in M_{Q' \times \infty},\quad \mathcal{A} \in M_{\infty \times \infty}.   
    \label{dec}
\end{equation}
Here, $B_{\infty}= (|\Delta a^{(1)} \rangle, \ldots |\Delta a^{(\ell)}\rangle \ldots)$ contains all the difference vectors $\{|\Delta a^{(\ell)}\rangle \}_{\ell=0}^{\infty}$, and we have defined $ \mathcal{A} \equiv \sum_{n=0}^{\infty} \chi_0^{n}\,(\omega_c t)^{2(n+1)}\, M^{(n)}$.

\smallskip

\emph{Step 2: Short time expansion of $\langle \Delta a^{(1)}| \Sigma^{-1} |\Delta a^{(1)} \rangle$.}  Truncating Eq.\,(\ref{dec}) at finite order $R$ gives an approximation of the covariance matrix:
\begin{equation}
 \Sigma \approx \Sigma_R=B_R\, \mathcal{A}_R\,B_R^{\top},   \qquad \mathcal{A}_R \equiv \sum_{n=0}^R \chi_0^{n}\,(\omega_c t)^{2(n+1)}\, M^{(n)} . 
 \label{Ar}
\end{equation}
If $2R+1>Q'$ -- needed for a faithful approximation--, we can rearrange $\Sigma_R$  as the product of three \emph{square} matrices of dimension $Q'$\,:
\begin{equation*}
    \Sigma_{R}= B_{Q'}\, [C\mathcal{A}_R C^{\top}]\, B_{Q'}^{\top} =B_{Q'}\, \mathcal{J}_R^{(Q')}\, B_{Q'}^{\top}\,, \quad \quad C=B_{Q'}^{-1}B_{R},
\end{equation*}
with $\mathcal{B} \equiv \{|\Delta a^{(1)}\rangle, \ldots, |\Delta a^{(Q')}\rangle\}$ a basis of $\mathbb{R}^Q$.
 As $B_{Q'}^{-1} |\Delta a^{(\ell)}\rangle= e_{\ell}$  (the $\ell$th canonical vector) for $1 \leq \ell \leq Q'$, the first $Q'$ column vectors form the identity matrix and we can write compactly 
\begin{equation}
    C=( [I_{Q'}], B_{Q'}^{-1} |\Delta a^{(Q'+1)}\rangle, \ldots, B_{Q'}^{-1} |\Delta a^{(2R+1)}\rangle). 
    \label{mC}
\end{equation}
Inverting leads to the following short-time expansion of the quadratic form:
\begin{equation}
  t^2  \langle \Delta a^{(1)}| B_{Q'}^{-1} [\mathcal{J}_R^{(Q')}]^{-1} [B_{Q'}^{-1}]^{\top} |\Delta a^{(1)} \rangle = t^2 \,[\mathcal{J}_R^{(Q')}]^{-1}_{11}. 
  \label{bilinear}
\end{equation} 

\smallskip

\emph{Step 3: Decomposition of $[\mathcal{J}_R^{(Q')}]^{-1}_{11}$ using the Cauchy-Binet formula}.
To evaluate $[\mathcal{J}_R^{(Q')}]^{-1}_{11}$ we rewrite it in terms of the \emph{adjugate} matrix ${\rm adj}(\mathcal{J}_R^{(Q')})$:
\begin{equation} \!\!\!\!\!\!\!\!\!\!\!\!\!\!\!\!\!\!\!\!\!\!
[\mathcal{J}_R^{(Q')}]^{-1}_{11}= \frac{[{\rm adj}(\mathcal{J}_R^{(Q')})]_{11}}{{\rm det}([\mathcal{J}_R^{(Q')}])} = \frac{{\rm det}([\mathcal{J}_R^{(Q')}]_{1;1})}{{\rm det}(\mathcal{J}^{(Q')}_R)}, \qquad \mathcal{J}_{R}^{(Q')}\, {\rm adj}(\mathcal{J}^{(Q')}_{R})= {\rm det}(\mathcal{J}_R^{(Q')}),
\label{ar}
\end{equation}
with $[X]_{i;j}$ denoting the \emph{minor} matrix obtained after removing row $i$ and column $j$. 
We compute both the numerator and denominator in Eq.\,(\ref{ar}) using the Cauchy-Binet identity, which relates the determinant of a matrix product to the product of their minors:
\begin{equation*}
    {\rm det } (\mathcal{J}_R^{(Q')})=  {\rm det } (C\mathcal{A}_R C^{\top})= \langle v_C|\,G_{\mathcal{A}_R} |v_C\rangle.
\end{equation*}
Here, $|v_C\rangle$ is the column vector consisting of all the $Q'$-order minors of the matrix $C$, with components $v_{\ell}={\rm det}(C_{ \,;\mathcal{S}_{\ell}})$,  where $\mathcal{S}_{\ell}=\{ i_1,\ldots,i_{R-Q'}\} $ is a subset of $ \{1, \ldots 2R+1 \}$ of cardinality $|\mathcal{S}_{\ell}|=2R-Q'$  labeling the rows to be eliminated, and $1 \leq \ell \leq  {2R+1 \choose Q'} $. Meanwhile, the square matrix $G_{\mathcal{A}_R} $ contains all the $Q'$-order minors of matrix $\mathcal{A}_R$ as entries, obtained after eliminating $2R-Q'$ rows and columns: $[G_{\mathcal{A}_R}]_{\ell_1,\ell_2}={\rm det}([\mathcal{A}_R]_{ \mathcal{S}_{\ell_1};\mathcal{S}_{\ell_2}})$, with $S_{\ell_i}$ defined as above and $1 \leq \ell_i \leq {2R+1 \choose Q'} $, $i \in \{1,2\}$. 

\smallskip

\emph{Step 4: Computation of $[\tilde{\mathcal{J}}^{(Q')}_{R}]^{-1}_{11}$ short-time behavior.} 
We compute an arbitrary matrix element of $G_{\mathcal{A}_R}$
using the \emph{Leibniz formula}.
Let $S_{\ell_1}^{c} = \{r_1, \dots, r_{Q'}\}$ and
$S_{\ell_2}^{c} = \{c_1, \dots, c_{Q'}\}$ be subsets of
$\{1, \dots, 2R+1\}$ of cardinality $Q'$, obtained after removing from
$\mathcal{A}_R$ the row set $\mathcal{S}_{\ell_1}$ and column set
$\mathcal{S}_{\ell_2}$, respectively. We denote by
$\mathrm{Bij}(S_{\ell_1}^{c} \to S_{\ell_2}^{c})$ the set of all possible
$Q'!$ bijections between these sets. We write:
\begin{equation}
\det([\mathcal{A}_{R}]_{\mathcal{S}_{\ell_1}; \mathcal{S}_{\ell_2}})
=
\sum_{\sigma \in \mathrm{Bij}(S_{\ell_1}^{c} \to S_{\ell_2}^{c})}
\mathrm{sign}(\sigma)
\prod_{i \in S_{\ell_1}}
[\mathcal{A}_R]_{i,\sigma(i)}.
\label{detA}
\end{equation}
Here, $\mathrm{sign}(\sigma) \in \{ \pm 1\}$ is the of parity the bijection, which is $1$ for even permutations of $\{c_1,\ldots c_{Q'}\}$ and $-1$ otherwise. The matrix  $\mathcal{A}_{R}$ is sparse, and its entries are supported only on antidiagonals satisfying
$[\mathcal{A}_R]_{nm} \ne 0$ if and only if 
$n + m = 2(k+1)\ \mathrm{for\ some}\ 0 \leq k \leq R$.
Thus, only entries with even $n + m$ are nonzero.
A permutation $\sigma \in \mathrm{Bij}(S_1 \to S_2)$ gives a nonzero contribution 
$\prod_{i \in S_{\ell_1}} [\mathcal{A}_R]_{i, \sigma(i)}$ if and only if
$i + \sigma(i) \in 2\mathbb{Z} \ \mathrm{and}\ 
i + \sigma(i) \leq 2(R + 1),\ \forall i \in S_1$.
We denote such permutations as \textit{parity-compatible}. Each valid entry \( [\mathcal{A}_R]_{i, \sigma(i)} \) carries a factor of \( (\omega_c t)^{i + \sigma(i)} \), so the full permutation term contributes
\begin{equation*}
 \prod_{i \in S_1} [\mathcal{A}_R]_{i,\sigma(i)} \sim (\omega_c t)^{\sum_{i \in\mathcal{S}_{\ell_1}^{\complement} } (i + \sigma(i))}. 
\end{equation*}
Since  $\sigma$ is a bijection from $\mathcal{S}_{\ell_1}^{\complement} $ to \(\mathcal{S}_{\ell_2}^{\complement}  \), we have $\sum_{i \in \mathcal{S}_{\ell_1}^{\complement}} \sigma(i) = \sum_{j \in \mathcal{S}_{\ell_2}^{\complement}} j= \sum_{j=1}^{Q'} c_j$, and any parity-compatible \( \sigma \), has total exponent $E$ of \( \omega_c t \) 
\begin{equation}
E({\mathcal{S}^{\complement}_{\ell_1}; \mathcal{S}^{\complement}_{\ell_2}})=\sum_{i \in \mathcal{S}_{\ell_1}^{\complement} } (i + \sigma(i)) = \sum_{i=1}^{Q'} r_i + \sum_{j=1}^{Q'} c_j 
\label{exponent},
\end{equation}
independent of the specific permutation \( \sigma \). Thus, all parity-compatible terms in the Leibniz expansion contribute to the same leading order in $\omega_c t$. Any permutation with some \( i + \sigma(i) \) odd gives no contribution, and no lower powers of \( \omega_c t\) can appear:
\begin{equation}
\det([\mathcal{A}_{R}]_{\mathcal{S}_{\ell_1}; \mathcal{S}_{\ell_2}})= c_{\mathcal{S}_{\ell_1}; \mathcal{S}_{\ell_2}} \, (\omega_c t)^{E({\mathcal{S}_{\ell_1}; \mathcal{S}_{\ell_2}})}.
\label{det}
\end{equation}
The constant \( c_{\mathcal{S}_{\ell_1}; \mathcal{S}_{\ell_2}} \) can be expressed compactly in the form:
\begin{equation}
c_{\mathcal{S}_{\ell_1}; \mathcal{S}_{\ell_2}} = \sum_{ \sigma \in \mathcal{L}} \mathrm{sign}(\sigma) \prod_{i \in S^{\complement}_{\ell_1}}  \frac{1}{i! \, \sigma(i)!}\,(-1)^{(i- \sigma(i))/2}\, \chi_{\frac{i + \sigma(i)}{2} - 1} , 
\label{cs}
\end{equation}
with $\mathcal{L} \equiv \{\sigma \in \mathrm{Bij}(S_{\ell_1}^{c} \to S_{\ell_2}^{c}) :\,
i + \sigma(i)\ \mathrm{even}\ \forall i\}$ denoting the set of all parity-compatible bijections.
From Eq.\,(\ref{det}),  the leading order in the computation of Eq.\,(\ref{detA}) comes from the minor $\mathcal{S}_{\ell_1}^{\complement}=\mathcal{S}^{\complement}_{\ell_2}=\mathcal{S}_1^{\complement}=\{1,2,\ldots Q'\}$, which minimizes the exponent $E({\mathcal{S}^{\complement}_{\ell_1}; \mathcal{S}^{\complement}_{\ell_2}})$. Evaluating Eq.\,(\ref{exponent}) for this particular minor leads to $E(\mathcal{S}^{\complement}_1;\mathcal{S}^{\complement}_1)=Q' (Q'+1)$. This imposes a lower bound on $R$, as all even anti-diagonals of $[\mathcal{A}_{R}]_{\mathcal{S}_1,\mathcal{S}_1}$ must be populated if we wish to obtain the correct value of $c_{\mathcal{S}_1,\mathcal{S}_1}$. Thus, in the short-time limit:
\begin{equation*}
\det([\mathcal{J}_{R}^{(Q')}]) \approx v_{1}^2\,\;c_{\mathcal{S}_1;\mathcal{S}_1} (\omega_c t)^{Q' (Q'+1)} =\,c_{\mathcal{S}_1;\mathcal{S}_1} (\omega_c t)^{Q' (Q'+1)} \approx \det(\mathcal{A}_R), 
\end{equation*}
where the second equality is a consequence of $v_1=\det ([I_{Q'}])=1$.

Computation of ${\rm det}([\mathcal{J}^{(Q')}_R]_{1;1})$  breaks into similar steps, the only  difference being that $[\mathcal{J}^{(Q')}_R]_{1;1}= C_{;1} 
\mathcal{A}_R C_{;1}$ is now a square $Q$-dimensional matrix.
The leading-order minor now corresponds to partition $\mathcal{S}'_1=\{ 2,3, \ldots Q' \}$:
\begin{equation*}
 \det([\mathcal{A}_{R}]_{1;1}) \approx c_{\mathcal{S}'_1;\mathcal{S}'_1} (\omega_c t)^{Q' (Q'+1)-2},
\end{equation*}
for the modified constant
\begin{equation}
c_{\mathcal{S}'_1;\,\mathcal{S}'_1}
=
\sum_{\sigma \in \mathcal{L'}}
\mathrm{sign}(\sigma)(\sigma)
\prod_{i \in S'^{c}_{1}}
\frac{1}{i!\,\sigma(i)!}
(-1)^{(i-\sigma(i))/2}\;
\chi_{\frac{i + \sigma(i)}{2} - 1}.
\label{csp}
\end{equation}
and $\mathcal{L'}$ the set of parity compatible permutations of $\mathcal{S}'_1$. Replacing into Eq.\,(\ref{bilinear}), we find 
\begin{equation}
   t^2 \,[\mathcal{J}^{(Q')}_{R}]^{-1}_{11}  \approx t^2 \frac{\det([\mathcal{A}_R]_{1;1})}{\det(\mathcal{A}_R)}\approx t^2 \frac{  c_{\mathcal{S}_1;\mathcal{S}_1}  (\omega_c t)^{Q'(Q'+1)-2}}{  c_{\mathcal{S}'_1;\mathcal{S}'_1} (\omega_c t)^{Q'(Q'+1)}} = \omega_c^{-2}\, K_{Q'} ,  
   \label{KQdef}
\end{equation}
where $K_{Q'}=  \frac{  c_{\mathcal{S}_1;\mathcal{S}_1} }{  c_{\mathcal{S}'_1;\mathcal{S}'_1}}$  is \emph{independent} of any details of the partition chosen. 
\end{proof}

\subsection{Closed form for $K_{Q'}$ under Gaussian spectral cutoff}
\label{app:KQ_gaussian_corrected}

{We compute $K_{Q'}$ for the Gaussian-decaying spectrum in Eq.\,(\ref{Gausspec}), }
so that its moments are
\begin{equation}
\chi_0^n = \frac{1}{2\pi}\int_0^\infty S(\omega)\,\omega^{2n}\,d\omega=\frac{\alpha}{4\pi}\omega_c^2\,\Gamma\left(n+\frac{s+1}{2}\right)=A\,\Gamma(n+\gamma).
\label{eq:chi0_gauss_ABgamma}
\end{equation}


By Eq.\,(\ref{KQdef}), we write $K_{Q'}$ as the ratio of two matrix determinants,
 $   K_{Q'}= {\det \tilde{\mathcal{C}}}/{\det \mathcal{C}}.$
The matrices in question have vanishing entries whenever $i$ and $j$ have opposite parity 
\begin{equation}
\begin{array}{rcl}
[\mathcal{C}_{ij}]_{i,j=1}^{Q'}
&=&
\frac{1}{i!\,j!}\,(-1)^{(i-j)/2}\,
\chi_0^{i+j}\,
\delta_{i+j,\,2(n+1)},
\\[2ex]
[\tilde{\mathcal{C}}_{ij}]_{i,j=2}^{Q'}
&=&
\frac{1}{i!\,j!}\,(-1)^{(i-j)/2}\,
\chi_0^{i+j}\,
\delta_{i+j,\,2(n+1)}.
\end{array}
\label{mmC}
\end{equation}

We rearrange the indices via a permutation
$\pi_{Q'} : (1,2,3,4,5,6,\dots)
\mapsto
(1,3,5,\dots;\ 2,4,6,\dots)$.
The resulting matrix is block diagonal, regardless of the spectrum details:
\begin{equation}
\mathcal C
=
\left(
\begin{array}{cc}
\mathcal C_{\rm oo} & 0 \\[3pt]
0 & \mathcal C_{\rm ee}
\end{array}
\right),
\qquad
\det(\mathcal C)=\det(\mathcal C_{\rm oo})\det(\mathcal C_{\rm ee}).
\end{equation}
Subscripts $\mathrm{oo}$ and $\mathrm{ee}$ denote the odd-odd and even-even blocks, respectively. The sizes of the blocks depend on $Q'$:
\begin{equation}
\!\!\!\!\!\!\!\!\!\!\!\!\!\!\!\!\!\!\!\!\!\!\#\{\mathrm{odd\ indices\ in}\ 1\ldots Q'\}
=
\left\lceil \frac{Q'}{2} \right\rceil,
\qquad
\#\{\mathrm{even\ indices\ in}\ 1\ldots Q'\}
=
\left\lfloor \frac{Q'}{2} \right\rfloor.
\end{equation}
For definiteness, take $Q'=2n$.  
We label odd indices by $i=2a-1$ with $a=1,\dots,n$.  
Substituting into Eq.\,(\ref{mmC}) gives
\begin{equation}
(\mathcal C_{oo})_{ab}
=
\frac{(-1)^{a-b}}{(2a-1)!\,(2b-1)!}\;
\chi_0^{\,a+b-2},
\qquad a,b=1,\dots,n.
\label{eq:odd_block_explicit}
\end{equation}

We further introduce the diagonal matrices
$$
F^{(n)}_{\mathrm{odd}}
\equiv 
\mathrm{diag}\!\left(\frac{1}{(2a-1)!}\right)_{a=1}^n,
\qquad
S\equiv 
\mathrm{diag}\!\big((-1)^a\big)_{a=1}^n,
$$
and define the Hankel moment matrix
$H^{(n)}_{ab}
\equiv 
\chi_0^{\,a+b-2}.$
Then Eq.\,(\ref{eq:odd_block_explicit}) factorizes as a diagonal congruence
\begin{equation}
\mathcal C_{oo}
=
S\,F^{(n)}_{\mathrm{odd}}\,
H^{(n)}\,
F^{(n)}_{\mathrm{odd}}\,S.
\label{eq:odd_congruence}
\end{equation}
Taking determinants yields
\begin{equation}
\det(\mathcal C_{oo})
=
\det\!\big(F^{(n)}_{\mathrm{odd}}\big)^2\,
\det\!\big(H^{(n)}\big),
\label{eq:det_odd_factorized}
\end{equation}
since $\det(S)^2=1$. Similarly, we label even indices by $i=2a$ with $a=1,\dots,n$.  
An analogous computation for the even-even block gives:

$$
(\mathcal C_{ee})_{ab}
=
\frac{(-1)^{a-b}}{(2a)!\,(2b)!}\;
\chi_0^{\,a+b-1}=
S\,F^{(n)}_{\mathrm{even}}\,
\widetilde H^{(n)}\,
F^{(n)}_{\mathrm{even}}\,S,
\qquad a,b=1,\dots,n,
$$
where
$F^{(n)}_{\mathrm{even}}
\equiv 
\mathrm{diag}\!\left(\frac{1}{(2a)!}\right)_{a=1}^n,$ and $
\widetilde H^{(n)}_{ab}
\equiv 
\chi_0^{\,a+b-1}.$

We decompose ${\tilde{\mathcal C}}$ analogously. Deleting index $1$ removes one odd row and column but leaves the even sector unchanged.
The even–even determinants cancel exactly in the ratio defining $K_{Q'}$. 
The diagonal factorial factors in
Eq.\,(\ref{eq:det_odd_factorized})
cancel between numerator and denominator:
$$
\frac{
\det\!\big(F^{(n)}_{\mathrm{odd}}\big)^2
}{
\det\!\big(\tilde{F}^{(n-1)}_{\mathrm{odd}}\big)^2
}
=
1,
$$
as $\tilde{F}^{(n)}_{\mathrm{odd}}\equiv 
\mathrm{diag}\!\left(\frac{1}{(2a-1)!}\right)_{a=2}^n$.
Therefore $K_{Q'}$ reduces purely to a ratio of Hankel determinants built from the spectral moments,
$$
K_{2n} =
\frac{\det\!\big[\chi_0^{\,a+b}\big]_{a,b=1}^{n-1}
}{
\det\!\big[\chi_0^{\,a+b-2}\big]_{a,b=1}^{n}
}.
$$
The same expression holds for $K_{2n+1}$, as follows  from this parity structure.
We now specialize for the Gaussian-decaying spectrum by 
using Eq.\,(\ref{eq:chi0_gauss_ABgamma}),
\begin{eqnarray*}
\det[\chi_0^{\,a+b-2}]_{a,b=1}^{n}
&=
A^n\,\,
\det\!\big[\Gamma(\gamma+a+b)\big]_{a,b=0}^{n-1}, \\
\det[\chi_0^{\,a+b}]_{a,b=1}^{n-1}
&=
A^n\,\,
\det\!\big[\Gamma(\gamma+a+b+2)\big]_{a,b=0}^{n-2} .
\end{eqnarray*}
The matrices
\(
[\Gamma(\gamma+a+b)]_{a,b=0}^{n-1}
\) and \(
[\Gamma(\gamma+a+b+2)]_{a,b=0}^{n-2}
\)
are moment (Gram) matrices for the Laguerre weights
\(
x^{\gamma+1}e^{-x}
\) and $e^{-x}\,x^{\gamma+1}$, respectively.
A standard Gram determinant identity then gives the exact product formula
$$ 
\det\!\big[\Gamma(\gamma+a+b)\big]_{a,b=0}^{n-1}
=
\prod_{k=0}^{n-1} k!\,\Gamma(\gamma+k), \quad \det\!\big[\Gamma(\gamma+a+b+2)\big]_{a,b=0}^{n-2}
=
\prod_{k=0}^{n-2} k!\,\Gamma(\gamma+2+k). $$
The constant $K_Q$ is then given by the exact expression
$$
K_{2n}
=\frac{
A^{n-1}
\prod_{k=0}^{n-2} k!\,\Gamma(\gamma+2+k)
}{
A^{n}
\prod_{k=0}^{n-1} k!\,\Gamma(\gamma+k)
}=\frac{1}{A\,(n-1)!}\,
\frac{\Gamma(\gamma+n)}{\Gamma(\gamma)\Gamma(\gamma+1)}.
$$
Using $(n-1)!=\Gamma(n)$ and the Gamma ratio asymptotics
\(
\Gamma(n+\gamma)/\Gamma(n)
\sim n^{\gamma}
\)
as $n\to\infty$, we obtain
$$
K_{2n}
=
\frac{1}{A\,\Gamma(\gamma)\Gamma(\gamma+1)}
\,n^{\gamma}
\left(
1+\frac{\gamma(\gamma-1)}{2n}
+O(n^{-2})
\right).
$$

Finally, since $n\sim Q/2$, the large-$Q$ scaling may be obtained as 
$$
K_{Q'} \sim \mathrm{const}(s)\,Q^{\gamma}
=\mathrm{const}(s)\,Q^{(s+1)/2},
\qquad Q\to\infty.
$$
Thus, for a Gaussian-decaying spectrum, $K_{Q'}$ grows polynomially with exponent $(s+1)/2$. The dependence on $s$ enters both the exponent and the prefactor through Gamma functions.

\vspace{5mm}
\noindent\textbf{References}
\vspace{.3cm}

\bibliography{NoisyRamseyBib}
\bibliographystyle{bibstyle_v4}

\end{document}